\newtheorem{fact}{Fact}
\newcommand{\junk}[1]{}
\newcommand{\junk}[1]{}
\newtheorem{fact}{Fact}
\let\tilde\widetilde
\begin{document}
\mainmatter

\title{Consequences of APSP, triangle detection, and 3SUM  hardness
  for separation between determinism and non-determinism\thanks{Research
supported in part by VR grant 2017-03750 (Swedish Research Council).}}


\author{
  Andrzej Lingas\inst{1}
}

\institute{
  Department of Computer Science, Lund University, 22100 Lund, Sweden.
  \newline
  E-mail:~\texttt{Andrzej.Lingas@cs.lth.se}
}

\date{}


\maketitle
\begin{abstract}
Let $NDTIME(f(n),g(n))$ denote the class of problems solvable in
  $O(g(n))$ time by a multi-tape Turing machine using an $f(n)$-bit
  non-deterministic oracle, and let $DTIME(g(n))=NDTIME(0,g(n)).$ We
  show that if the all-pairs shortest paths problem (APSP) for
  directed graphs with $N$ vertices and integer edge weights within a
  super-exponential range $\{ -2^{N^{k+o(1)}},...,2^{N^{k+o(1)}}\}$,
  $k\ge 1,$ does not admit a truly subcubic algorithm then for any
  $\epsilon > 0,$
  \newline
  $NDTIME(\lceil \frac 12 \log_2 n \rceil ,
  n)\nsubseteq DTIME(n^{1+\frac 1{2+k}-\epsilon}).$ If the APSP
  problem does not admit a truly subcubic algorithm already when the
  edge weights are of moderate size then we obtain even a stronger
  implication, namely that for any $\epsilon > 0,$ $NDTIME(\lceil
  \frac 12 \log_2 n\rceil , n)\nsubseteq DTIME(n^{1.5-\epsilon}).$
\junk{On the other hand,
we prove that one may assume w.l.o.g.
that the integer edge weights are at most in
$\{ -2^{N^{2+o(1)}},...,2^{N^{2+o(1)}}\}$,
where $N$ is the number of vertices in the input graph.}
Similarly, we show that if the triangle detection problem (DT) in a
graph on $N$ vertices does not admit a truly sub-$N^{\omega}$-time
algorithm then for any $\epsilon > 0,$ $NDTIME(\lceil \frac 12\log_2 n
\rceil ,n)\nsubseteq DTIME(n^{\omega/2-\epsilon}),$ where $\omega$
stands for the exponent of fast matrix multiplication.
\junk{
Also, we present a stronger
  implication from a conjecture on the more general problem of
  detecting a maximum weight $\ell$-clique.
    We present also a
  stronger implication from the conjecture that the
  more general problem of
  detecting a maximum weight $\ell$-clique does not admit a truly
  sub-$n^{\ell}$-time algorithm.}
For the more general problem of detecting a minimum weight
$\ell$-clique ($MWC_{\ell}$) in a graph with edge weights of moderate
size, we show that the non-existence of truly sub-$n^{\ell}$-time
algorithm yields for any $\epsilon >0$, $NDTIME((\ell -2)\lceil \frac
12 \log_2 n \rceil,\ n)\nsubseteq DTIME(n^{1+\frac {\ell
    -2}{2}-\epsilon})$.  Next, we show that if 3SUM for $N$ integers
in $\{-2^{N^{k+o(1)}},...,2^{N^{k+o(1)}}\}$ for some $k\ge 0,$ does
not admit a truly subquadratic algorithm then for any $\epsilon >0,$
$NDTIME(\lceil \log_2 n \rceil ,\ n)\nsubseteq$ $DTIME(n^{1+\frac
  1{1+k}-\epsilon }).$ Finally, we observe that the Exponential Time
Hypothesis (ETH) implies $NDTIME(\lceil k\log_2 n \rceil, n)
\nsubseteq DTIME(n)$ for some $k>0,$ while the strong ETH (SETH)
yields for any $\epsilon>0,$ $NDTIME(\lceil \log_2 n \rceil,
n)\nsubseteq DTIME(n^{2-\epsilon})$.  For comparison, the strongest
known result on separation between non-deterministic and deterministic
time only asserts
\newline
$NDTIME(O(n), n)\nsubseteq DTIME(n).$
\end{abstract}
\section{Introduction}\label{intro}
The complexity status of the all-pairs shortest paths problem (APSP)
in directed graphs with arbitrary edge weights is a major open
problem in the area of graph algorithms. In spite of several decades
of research efforts, no truly subcubic 
algorithm for this
problem is known. 

By truly subcubic, Vassilevska Williams and Williams \cite{VWW} mean
\newline
$O(N^{3-\delta}poly(\log M))$ for some $\delta >0$, where $N$ is the
number of vertices in the input graph or the number of rows and
columns in the input matrix, and the edge weights or matrix entries
respectively are in the range $\{ -M,...,M\}$. This definition
assumes that $M$ is not too large.  When $M$ is very
large, e.g., $M=2^{N^{\phi}}$ for some $\phi > 0,$ then $poly(\log M)$
can become at least polynomial in $N.$ For this reason, we shall adopt a more
strict definition of truly subcubic, namely $O(N^{3-\delta}(\log
M)^{1+o(1)})$ for some $\delta >0$.  This definition is still
compatible with the reductions presented in \cite{VWW}, in particular,
it allows for multiplication of $O(\log M)$ bit numbers in $(\log
M)^{1+o(1)}$ time, and it works for $M$ of $2^{N^{k+o(1)}}$ size,
for any constant $k\ge 0.$

Vassilevska Williams and Williams presented a list
of eleven problems that they could show to be equivalent to the APSP
problem regarding the question of admitting a truly
subcubic algorithm \cite{VWW}. 
Thus, if any problem on the list
could be shown to admit a truly subcubic algorithm then
all the remaining problems on the list, in particular APSP, would have
truly subcubic algorithms.
Besides APSP and the verification
of the  naturally related distance (i.e., $(\min,+)$) 
matrix product, the list includes  problems of different form ranging from
multi-functions to decision problems. 
Some of the problems on the list immediately reduce to search problems.
In particular,
the problem of detecting a triangle of negative total edge weight (DNT)
belongs to the latter ones. For these reasons, the problems on the list
admit quite different upper time bounds in the quantum computational
model (cf. Table 2 in Appendix) or in the non-deterministic Turing machine or RAM model.
We utilize the presence of the problems on the list that directly reduce
to search problems in order to derive among other things
 the following
 implication, where
$NDTIME(f(n),g(n))$ denotes the class of problems solvable in
$O(g(n))$ time by a multi-tape Turing machine using an $f(n)$-bit
non-deterministic oracle, and $DTIME(g(n))=NDTIME(0,g(n)).$
If the APSP problem for directed graphs
with integer edge weights of moderate size
 does not admit a truly
subcubic algorithm then for any $\epsilon > 0,$
$NDTIME(\lceil \frac 12 \log_2 n \rceil, n)\nsubseteq DTIME(n^{1.5-\epsilon}).$
By an edge weight of moderate size, we mean a weight requiring $N^{o(1)}$-bit representation, where $N$ is the number of vertices in the
input graph or the number of rows and columns
in the input matrix, respectively. Observe that the best known and celebrated result
separating non-deterministic time from the deterministic one
in terms of our notation is just
$NDTIME(O(n), n)\nsubseteq DTIME(n)$ \cite{PPST}
(cf. \cite{Aj}).
Furthermore, no result of the form $NDTIME(O(n^q), n^q)\nsubseteq DTIME(n^q)$
for $q> 1$ is known.
Our more general result is as follows. 
\junk{
 If the APSP problem or any of the remaining problems
on the list with integer edge weights or integer matrix entries
does not admit a truly subcubic
algorithm then for any $\epsilon' >0,$
$NDTIME(\frac 12\log_2 n,\ n)\nsubseteq DTIME(n^{1.25{2+k}-\epsilon'}).$}
\par
\noindent
{\em Let $k\ge 0,$
and let  $Q$ stand for the set of integers $\{ -2^{N^{k+o(1)}},...,2^{N^{k+o(1)}}\}.$
If for any $\epsilon > 0,$
the APSP problem or any of the subcubic-time equivalent
problems for graphs on $N$ vertices  with edge weights in $Q$
or $N\times N$ matrices with entries in $Q$ does not
admit an $O(N^{3-\epsilon + k+o(1)})$-time algorithm then
for any $\epsilon' >0,$
$NDTIME(\lceil \frac 1{2+k}
\log_2 n \rceil,\ n)\nsubseteq DTIME(n^{1+\frac 1{2+k}-\epsilon'})$ holds.}
\par
\noindent
Thus, if it is true that the APSP problem for directed graphs
does not admit a truly
subcubic algorithm when the edge weights are
within a super-exponential range
then showing this seems beyond the
reach of presently known techniques. Simply, it would imply
an enormous breakthrough not only in
lower bounds on Boolean circuit size for
natural problems but also in separation between deterministic
and non-deterministic time.
\junk{
On the other hand,
we show that one may assume w.l.o.g.
that the integer edge weights and matrix entries for the problems
on the list are at most in
$R=\{ -2^{N^{2+o(1)}},...,2^{N^{2+o(1)}}\}$,
where $N$ is the number of vertices in the input graph
or the number of rows and columns in the input matrix,
respectively. Therefore, for ranges substantially larger
than $R$ these problems admit truly subcubic algorithms
(according to our definition)
by the reduction to the range $R$  and an application
of a straightforward cubic algorithm
within this range.}

We also consider the much simpler problem of detecting a triangle (DT)
in a (undirected) graph. It immediately reduces to a search problem
and in dense graphs it can be solved in $O(n^{\omega})$ time by a well
known reduction to fast matrix multiplication \cite{IR}.  The somewhat
informal concept of a truly subcubic algorithm can be naturally
generalized to include that of a truly sub-$N^{\delta}$-time algorithm
by replacing $N^{3-\epsilon}$ with $N^{\delta-\epsilon},$ respectively
\cite{VWW}.  Similarly, we obtain the following weaker implication: if
the DT problem for a graph on $N$ vertices does not admit a truly
sub-$N^{\omega}$-time algorithm then for any $\epsilon > 0,$
$NDTIME(\lceil \frac 12\log_2 n \rceil,n)\nsubseteq
DTIME(n^{\omega/2-\epsilon}).$

Next, we consider the problem of detecting a triangle of minimum total
weight (MWT) in an edge weighted graph.  Note that the DNT and DT
problems can be easily reduced to the MWT problem. It follows in
particular that if the MWT problem admits a truly subcubic algorithm
then any problem on the aforementioned list has also this property
(cf. Conclusion in \cite{VWW}). A natural generalization of the MWT
problem is that of determining a minimum weight $\ell$-clique in an
edge weighted graph (MWC-$\ell$) conjectured to not admit a truly
sub-$N^{\ell }$-time algorithm \cite{ABD}. We show that if this
conjecture holds for graphs with $N$ vertices and edge weights in $\{
-2^{N^{k+o(1)}},...,2^{N^{k+o(1)}}\}$ then for any positive $\epsilon
>0$, $NDTIME((\ell -2)\lceil \frac 1{2+k} \log_2 n
\rceil,\ n)\nsubseteq $ $DTIME(n^{1+\frac {\ell -2}{2+k}-\epsilon})$

The 3SUM problem, which is to decide if an
input set of numbers contains three numbers summing to zero, is widely
believed to not admit a truly subquadratic algorithm (weakened 3SUM
hypothesis). For this reason, one has shown truly subquadratic
reducibility of 3SUM to several other problems believed to have
almost quadratic time complexity in order to demonstrate their relative
hardness (see, e.g., \cite{DKP,LPV,VWW}).  We show that if 3SUM, when the
input $N$ numbers are integers in $\{
-2^{N^{k+o(1)}},...,2^{N^{k+o(1)}}\}$ for some $k\ge 0,$ does not
admit a truly subquadratic algorithm then for any $\epsilon >0,$ 
\newline
$NDTIME(\lceil \frac 1{1+k} \log_2 n \rceil,\ n)\nsubseteq DTIME(n^{1+\frac 1{1+k}-\epsilon })$ holds.

Finally, we observe that the Exponential Time Hypothesis (ETH)
\cite{IP} implies $NDTIME(\lceil k\log_2 n \rceil, n)
\nsubseteq DTIME(n)$ for some $k>0$, while the strong ETH 
(SETH) \cite{CIP} yields for any
$\epsilon>0,$ $NDTIME(\lceil \log_2 n \rceil, n)\nsubseteq
DTIME(n^{2-\epsilon})$.

Our implications from the known conjectures are in a form of a negated
containment of a linear-time with a logarithmic non-deterministic oracle
in a respective deterministic bounded-time class. For interesting or
even dramatic consequences of this kind of containment
see subsection 1.1.2 in \cite{Ryan}.

Marginally, 
we present simple $\tilde{O}(N^{1.5})$ time quantum algorithms for the MWT problem
and the problem of
verifying if an $N\times N$ matrix defines a metric (MDM),
occurring on the list in \cite{VWW}.
\junk{and running in
$\tilde{O}(N^{1.5})$ time provided that an oracle for the weighted
adjacency matrix representing an edge weighted graph on $N$ vertices
is given.
The notation $\tilde{O}(\ )$ suppresses poly-logarithmic in
$N$ factors.  Observe that for the simpler DT problem even faster
quantum algorithms are known \cite{MS} while on the other hand the
best known quantum algorithms for APSP are substantially slower
\cite{NV}, see Table 2 (Appendix). We also present a similar
$\tilde{O}(N^{1.5})$-time quantum algorithm for the problem of
verifying if an $N\times N$ matrix defines a metric (MDM), occurring
on the list in \cite{VWW}.}

\begin{remark}
  All our main results
  can be
viewed as putting straightforward/obvious and/or known implications of
the conjectures in a common framework. This in particular shows subtle
differences between the implications and exhibits dependency on the
range parameter.  Thus, the contribution of our paper is mostly a
conceptual one formalizing a known intuition.
\end{remark}

\begin{table*}[t]
\begin{center}
\begin{tabular}{||c|c||} \hline \hline
conjecture on & implication
\\ \hline \hline
APSP and equiv. &  for $\epsilon > 0,$
$NDTIME(\lceil \frac 12 \log_2 n\rceil , n)\nsubseteq DTIME(n^{1.5-\epsilon})$
\\ \hline
DT  & for $\delta < \omega$,
$NDTIME(\lceil \frac 12\log_2 n \rceil ,\ n)\nsubseteq DTIME(n^{\delta/2})$
\\ \hline
DC-$\ell$ where $\ell|3$ & for $\epsilon>0$,
$NDTIME((\ell -2)\lceil \frac 12 \log_2 n \rceil ,\ n)\nsubseteq DTIME(n^{\omega  \ell /6-\epsilon})$
	\\ \hline
MWC-$\ell$ &  for $\epsilon
>0$, $NDTIME((\ell -2)\lceil \frac 12 \log_2 n
\rceil,\ n)\nsubseteq DTIME(n^{1+\frac {\ell -2}{2}-\epsilon})$ 
\\ \hline
3SUM & for $\epsilon >0$, $NDTIME(\lceil \log_2 n \rceil, n) \nsubseteq DTIME(n^{2-\epsilon})$
\\ \hline
ETH &  for some $k>0$, $NDTIME(\lceil k\log_2 n \rceil, n) \nsubseteq DTIME(n)$ 
\\ \hline
SETH & for $\epsilon>0,$
  $NDTIME(\lceil \log_2 n \rceil, n)\nsubseteq
  DTIME(n^{2-\epsilon})$
\\ \hline \hline
\end{tabular}
\label{table: 1}
\vskip 0.5cm
\caption{Implications from the conjectures when the input edge
  weights or input matrix entries or input
  numbers are assumed to be integers in $\{ -2^{N^{o(1)}},\ldots,2^{N^{o(1)}}   \}$.
For the definition of DC-$\ell$, see Section 3.}
\end{center}
\end{table*}

\section{Preliminaries}
For a positive integer $r,$
$[r]$ will denote $\{1,...,r\}$.

Among the eleven problems on the list of subcubic-time equivalent
problems in \cite{VWW}, we shall refer explicitly to:
\begin{itemize}
\item
\noindent
the all-pairs shortest path problem in
directed edge weighted graphs (APSP), 
\item
the problem of detecting a triangle of 
negative total edge weight (DNT), and
\item
the problem of verifying if a matrix
defines a metric (MDM).
\end{itemize}
Note that an $N\times N$ matrix $K$ 
defines a metric on $[N]$ if and only if
it has non-negative entries, $K[i,j]=0$
iff $i=j$ and
$K[i,j]=
K[j,i]$ for all $i,\ j\in [N],$ and  $K[i,j]\le K[i,k]+K[k,j]$ for all
$i,\ j. \ k \in [N].$
Of course, the first three conditions can be easily verified
in quadratic time.
\junk{ and perhaps therefore, such a matrix $K$
is said to define a metric in \cite{VWW} if it
just fulfills the last, i.e., the triangle inequality
condition.}
\par
We shall also consider the problem of finding a 
triangle of minimum total edge weight in an
edge weighted graph (MWT) and the simpler problem
of detecting a triangle in a (unweighted) graph (DT),
as well as their generalizations where a triangle
is replaced by an $\ell$-clique (MWC-$\ell$ and
DC-$\ell$, respectively).
Note that DNT and DT trivially reduce to MWT.
\par
Furthermore, we shall consider the 3SUM problem
which is to decide if a given set of numbers
contains three elements whose sum is zero,
and the Exponential Time Hypothesis (ETH) \cite{IP}
as well as its strong version (SETH) \cite{CIP}.
\par
The reductions proving the subcubic-time equivalences
between the eleven problems on the list in Theorem 1.1 in \cite{VWW}
do not introduce new very large edge  weights or matrix
entries. They typically use the edge weights or
matrix entities from the reduced
problem. The exception is the use of $+\infty $ or $-\infty$
in case of some problems on the list. However, the latter
can be easily simulated by multiplying the maximum or
minimum of the assumed range by a polynomial in the 
number of vertices or in the number of matrix rows/columns
in the considered
problem.

In Definition 3.1 in \cite{VWW}, one requires an
$O(m^{3-\delta})$ bound on the time taken by a subcubic reduction,
where $m=N\log M$ in our terms. This definition as that of truly subcubic
also assumes that the edge weights or matrix entries are not too large.
We can replace  the required upper time-bound by
\newline
$O(N^{3-\delta}(\log M)^{1+o(1)})$
to extend the edge weight or matrix entry range to at least
$\{ 2^{N^{k+o(1)}},...,2^{N^{k+o(1)}}\}$ for any fixed $k\ge 0.$
In fact, the authors show in Section 4.3  of \cite{VWW} that
if random bits are allowed then the
polylogarithmic dependence 
on $M$ can be replaced
by a polylogarithmic dependence on $N$
in the aforementioned reductions.

Thus, Theorem 1.1 in \cite{VWW} holds also when the edge weights
or matrix entries in the problems on the list are in the
range $\{-2^{-N^{k+o(1)}},...,2^{-N^{k+o(1)}}\}$ for some $k\ge 0$
(under the assumption of the more strict definition of
``truly subcubic'' from the introduction).
Hence, we have the following fact.

\begin{fact}\label{fact: VWW}
Let $k\ge 0.$
If the APSP problem or the DNT problem, or the problem
of verifying if a matrix defines a metric, or 
any of the remaining problems on the list in \cite{VWW},
does not admit a truly subcubic
algorithm when the edge weights or matrix entries
are in  $\{ -2^{N^{k+o(1)}},...,2^{N^{k+o(1)}}\}$
 then none of the problems admits a truly
subcubic algorithm
when the edge weights or 
matrix entries are in $\{ -2^{N^{k+o(1)}},...,2^{N^{k+o(1)}}\}.$
\end{fact}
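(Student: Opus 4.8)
The plan is to check that the web of subcubic reductions underlying Theorem~1.1 of \cite{VWW} survives enlarging the weight range, once the stricter notion of ``truly subcubic'' from the introduction ($O(N^{3-\delta}(\log M)^{1+o(1)})$) is in force. First I would recall the relevant notions from \cite{VWW}: a \emph{subcubic reduction} from a problem $A$ to a problem $B$ is an oracle algorithm that, on an instance of $A$ on $N$ vertices, runs in $O(N^{3-\delta})$ time, makes oracle calls to instances of $B$ of sizes $N_1,\dots,N_t$, and is engineered so that plugging in a truly subcubic algorithm for $B$ yields a truly subcubic algorithm for $A$ (the sizes $N_i$ enter the total running time only through a sum that is again $O(N^{3-\delta'})$). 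Theorem~1.1 of \cite{VWW} supplies such reductions in both directions between each pair of the eleven listed problems, under the assumption that $M$ is moderate.

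Next I would carry out the routine audit already indicated in the paragraphs preceding the statement: each of these reductions uses only edge weights or matrix entries copied from the source instance, with the sole exception of the symbols $+\infty$ and $-\infty$, which are replaced by $\pm(2M+1)\,p(N)$ for a fixed polynomial $p$. Hence an oracle instance produced from a source instance with entries in $\{-M,\dots,M\}$ has entries in $\{-M',\dots,M'\}$ with $\log M' = \log M + O(\log N)$, and its vertex count is $O(N)$. When $M = 2^{N^{k+o(1)}}$ this gives $\log M' = N^{k+o(1)}$ as well, so the range $\{-2^{N^{k+o(1)}},\dots,2^{N^{k+o(1)}}\}$ is closed under all the reductions; for $k=0$ the ``moderate'' range $\{-2^{N^{o(1)}},\dots,2^{N^{o(1)}}\}$ is preserved in the same way.

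I would then observe that, with the exponent of $\log M$ relaxed from a constant to $1+o(1)$ in Definition~3.1 of \cite{VWW}, the composition still goes through: a subcubic reduction from $A$ to $B$ costs $O(N^{3-\delta}(\log M)^{1+o(1)})$ plus the cost of the oracle calls, and feeding these calls to an $O(N^{3-\epsilon}(\log M)^{1+o(1)})$-time algorithm for $B$ --- with $\log M_i = (\log M)(1+o(1))$ by the previous paragraph --- still yields an $O(N^{3-\epsilon'}(\log M)^{1+o(1)})$ bound, i.e.\ a truly subcubic algorithm for the enlarged range. Chaining the constantly many reductions of Theorem~1.1 then propagates the existence of a truly subcubic algorithm from any one problem on the list to all of them within $\{-2^{N^{k+o(1)}},\dots,2^{N^{k+o(1)}}\}$, and the contrapositive is exactly Fact~\ref{fact: VWW}. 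Finally I would cite Section~4.3 of \cite{VWW}, which shows that if randomness is permitted the $(\log M)^{1+o(1)}$ factors may be replaced throughout by $(\log N)^{1+o(1)}$, giving a randomized strengthening.

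The main obstacle here is not conceptual but is the bookkeeping of the audit: one must confirm for \emph{every} reduction in the list --- not merely the three problems named explicitly (APSP, DNT, MDM) --- that it introduces no large weights beyond the $\pm\infty$ simulation, and that the sizes and weight ranges of the oracle instances it creates remain in the regime where the composition with the relaxed $(\log M)^{1+o(1)}$ factor is valid. Since the authors of \cite{VWW} effectively performed this check already (and gave its randomized refinement in Section~4.3), the result is, as the Remark warns, a formalization of known intuition rather than a new insight.
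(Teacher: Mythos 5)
Your proposal is correct and follows essentially the same route as the paper: the paper justifies Fact~\ref{fact: VWW} by exactly this audit of the reductions behind Theorem~1.1 of \cite{VWW} (they only reuse weights from the source instance, with $\pm\infty$ simulated by polynomially scaled range extremes), combined with relaxing the time bound in Definition~3.1 to $O(N^{3-\delta}(\log M)^{1+o(1)})$ so the compositions still go through, and the pointer to Section~4.3 of \cite{VWW} for the randomized refinement. Your write-up merely makes the bookkeeping of weight ranges and oracle-call sizes slightly more explicit than the paper does.
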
 

\junk{
  Finally, we remark that the definition of a subcubic reduction
between problems introduced in \cite{VWW} seems to suffer
a similar deficiency as that of a truly subcubic algorithm, when
the range $\{-M,...,M\}$ of edge weights or matrix entries becomes very large,
e.g., exponential in the number $N$ of vertices of the input graph
or the number $N$ of rows/columns of the input matrix.
In Definition 3.1 in \cite{VWW}, one requires an
$O(m^{3-\delta})$ bound on the time taken by a subcubic reduction,
where $m=N\log M$ in our terms. This definition as that of truly subcubic
also assumes that the edge weights or matrix entries are not too large.
We can replace  the required upper time-bound by
\newline
$O(N^{3-\delta}(\log M)^{1+o(1)})$
to extend the edge weight or matrix entry range to at least
\newline
$\{ 2^{N^{k+o(1)}},...,2^{N^{k+o(1)}}\}$ for any fixed $k\ge 0.$
In fact, the authors show in Section 4.3  of \cite{VWW} that
if random bits are allowed then the
polylogarithmic dependence 
on $M$ can be replaced
by a polylogarithmic dependence on $N$
in the aforementioned reductions.}

\section{Implications of APSP and DT hardness}

To start, we observe that the decision version of the MWT
problem, in particular of the DNT problem, as well as the complement of
the problem of verifying if a matrix defines a metric admit
$N^{2+k+o(1)}$-time algorithms with a non-deterministic
$\lceil \log_2 N\rceil$-bit
oracle, when the edge weights or matrix entries are in the set $\{
2^{N^{k+o(1)}},...,2^{N^{k+o(1)}}\}$ for some $k\ge 0.$
Recall that $N$ stands
for the number of vertices in the input graph or the number of
rows/columns in the input matrix, respectively.

\begin{lemma}\label{lem: guess}
Let $k\ge 0,$ let 
 $Q$ denote the set of integers $\{ -2^{N^{k+o(1)}},...,2^{N^{k+o(1)}}\},$
and let $d\in Q.$
The problem of determining if a graph with 
$N$ vertices and edge weights in $Q$
has a triangle of 
weight smaller than $d,$ as well as the problem of verifying
that an $N\times N$ matrix with entries in $Q$
 does not define a metric are in 
$NDTIME(\lceil \log_2 N\rceil , N^{2+k+o(1)})$.
\end{lemma}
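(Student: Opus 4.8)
The plan is to verify each of the two claims by exhibiting an explicit non-deterministic procedure that uses only $\lceil \log_2 N\rceil$ bits of non-determinism and runs in $N^{2+k+o(1)}$ deterministic time. For the triangle problem, observe that a triangle of weight smaller than $d$, if one exists, is determined by three vertices $i,j,\ell\in[N]$. Rather than guessing all three indices (which would cost $3\lceil \log_2 N\rceil$ bits), I would guess a single vertex, say the one of smallest index in a minimum-weight triangle — this uses exactly $\lceil \log_2 N\rceil$ non-deterministic bits. Having fixed that vertex $i$, the machine deterministically scans all $O(N^2)$ pairs $(j,\ell)$ with $j,\ell$ in the appropriate index range, and for each pair reads the three edge weights $w(i,j)$, $w(j,\ell)$, $w(i,\ell)$ from the input, each of which is an integer in $Q$ with an $N^{k+o(1)}$-bit representation. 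Summing three such numbers and comparing the result to $d$ takes $N^{k+o(1)}$ time per pair, for a total of $N^{2}\cdot N^{k+o(1)} = N^{2+k+o(1)}$ time. The machine accepts iff some scanned pair yields total weight $< d$; correctness of the non-deterministic acceptance criterion is immediate, since a witness vertex exists exactly when a qualifying triangle does.

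For the matrix problem, the complement of ``$K$ defines a metric'' means $K$ violates at least one of the defining conditions. The first three conditions — nonnegativity, $K[i,j]=0 \iff i=j$, and symmetry — are checkable deterministically in $O(N^2)$ arithmetic operations on $N^{k+o(1)}$-bit entries, hence in $N^{2+k+o(1)}$ time with no non-determinism at all. The only condition whose violation is not obviously checkable within the target time bound is the triangle inequality $K[i,j]\le K[i,\ell]+K[\ell,j]$, since a naive search over all triples $(i,j,\ell)$ costs $N^3$. Here I again guess a single index $\ell\in[N]$ using $\lceil\log_2 N\rceil$ bits, then deterministically test, for all $O(N^2)$ pairs $(i,j)$, whether $K[i,j] > K[i,\ell]+K[\ell,j]$; the arithmetic on $N^{k+o(1)}$-bit numbers again contributes only an $N^{k+o(1)}$ factor. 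The machine accepts if either a cheap-condition violation is found or the guessed $\ell$ exposes a triangle-inequality violation. A violating triple $(i,j,\ell)$ exists iff some choice of the middle index $\ell$ leads to acceptance, so the non-deterministic machine accepts exactly the ``does not define a metric'' instances.

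The step I expect to require the most care is not the algorithmic idea, which is straightforward guessing of a single pivot index, but the bookkeeping around the $o(1)$ in the exponent: the entries of $Q$ have bit-length $N^{k+o(1)}$, and I must confirm that reading, adding, and comparing a constant number of such integers on a multi-tape Turing machine costs only $N^{k+o(1)}$ time — i.e., that the $o(1)$ terms from the arithmetic absorb into the $o(1)$ in $N^{2+k+o(1)}$ — and that the $\lceil\log_2 N\rceil$-bit oracle string is exactly what is needed to name one vertex (padding harmlessly when $N$ is not a power of two). Both are routine, so the lemma follows.
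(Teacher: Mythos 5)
Your proposal is correct and follows essentially the same route as the paper's proof: guess a single vertex (respectively, a single index of a violating triple) with the $\lceil \log_2 N\rceil$-bit oracle, then deterministically scan all $O(N^2)$ pairs, with the arithmetic on $N^{k+o(1)}$-bit entries absorbed into the $N^{2+k+o(1)}$ bound, and with the first three metric conditions checked deterministically. The only cosmetic difference is that you guess the middle index of the triangle-inequality violation while the paper guesses the first index, which changes nothing.
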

\begin{proof}
  In order to guess a vertex of a triangle of edge weight smaller than
  $d$, a non-deterministic $\lceil \log_2 N \rceil$-bit oracle is
  sufficient.  A multi-tape Turing machine can easily verify if the
  guessed vertex belongs to a triangle of edge weight smaller than
  $d,$ and if so return such a triangle. Simply, for each pair of
  vertices it can examine if the pair jointly with the guessed vertex
  forms a triangle of total edge weight smaller than $d,$ and if so
  output the triangle in $N^{2+k+o(1)}$ total time.

A deterministic multi-tape Turing machine can also easily verify if an
input $N\times N$ matrix $K$ satisfies the first three condition
required by a metric, including the symmetry one, in $N^{2+k+o(1)}$
time. If the three conditions are satisfied it remains to check if
$K[i,j]>K[i,k]+K[k,j]$ for some $i,\ j. \ k.$ Again, to guess the
first index belonging to such a triple of indices a non-deterministic
$\lceil \log_2 N \rceil$-bit oracle is sufficient.  Similarly, to
verify if the guessed first index, say $i,$ belongs to a triple of
indices violating the triangle inequality condition can be done by
checking if $K[i,j]>K[i,k]+K[k,j]$ for all other possible indices
$j,\ k.$ Again, it can be easily done by multi-tape Turing machine in
$N^{2+k+o(1)}$ time.
\qed
\end{proof}

\begin{theorem}\label{theo: APSP}
Let $k\ge 0,$
and let  $Q$ stand for the set of integers 
\newline
$\{ -2^{N^{k+o(1)}},...,2^{N^{k+o(1)}}\}.$
If for any $\epsilon > 0,$
the APSP problem or any of the subcubic-time equivalent
problems for graphs on $N$ vertices  with edge weights in $Q$
or $N\times N$ matrices with entries in $Q$ does not
admit an $O(N^{3-\epsilon + k+o(1)})$-time algorithm then
for any $\epsilon' >0,$
$NDTIME(\lceil \frac 1{2+k}\log_2 n \rceil,\ n)\nsubseteq $
\newline
$DTIME(n^{1+\frac 1{2+k}-\epsilon'})$ holds.
\end{theorem}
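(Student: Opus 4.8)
The plan is to argue by contraposition: assuming the deterministic-time inclusion $NDTIME(\lceil \frac{1}{2+k}\log_2 n\rceil, n)\subseteq DTIME(n^{1+\frac{1}{2+k}-\epsilon'})$ for some $\epsilon'>0$, I would construct a truly subcubic algorithm (in the strict sense of the introduction, i.e.\ running in $O(N^{3-\epsilon+k+o(1)})$ time for some $\epsilon>0$) for the decision version of the DNT problem, or more generally for testing whether a graph has a triangle of weight below a threshold $d$, with edge weights in $Q$. By Fact~\ref{fact: VWW}, a truly subcubic algorithm for this problem (equivalently, for deciding ``matrix does not define a metric'') propagates to truly subcubic algorithms for APSP and all the subcubic-equivalent problems with weights in $Q$, contradicting the hypothesis.

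The key step is to set the size parameters correctly. Lemma~\ref{lem: guess} tells us that ``does $G$ on $N$ vertices with weights in $Q$ have a triangle of weight $<d$?'' lies in $NDTIME(\lceil \log_2 N\rceil, N^{2+k+o(1)})$. I want to view this as an instance of the class $NDTIME(\lceil \frac{1}{2+k}\log_2 n\rceil, n)$. So I set $n = N^{2+k+o(1)}$, which gives $\log_2 n = (2+k+o(1))\log_2 N$, hence $\lceil \log_2 N\rceil \le \lceil \frac{1}{2+k}\log_2 n\rceil$ up to the $o(1)$ slack (one absorbs the discrepancy into the $o(1)$ in the exponent of $n$, or pads the oracle/input harmlessly). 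Thus the triangle-threshold problem on $N$-vertex graphs is in $NDTIME(\lceil \frac{1}{2+k}\log_2 n\rceil, n)$ with $n=N^{2+k+o(1)}$. Applying the assumed inclusion, it is then in $DTIME(n^{1+\frac{1}{2+k}-\epsilon'})$, which in terms of $N$ is
\[
N^{(2+k+o(1))(1+\frac{1}{2+k}-\epsilon')} = N^{(2+k) + 1 - (2+k)\epsilon' + o(1)} = N^{3+k - (2+k)\epsilon' + o(1)}.
\]
Setting $\epsilon = (2+k)\epsilon' > 0$, this is $O(N^{3-\epsilon+k+o(1)})$, i.e.\ a truly subcubic algorithm for the stated weight range — contradicting the hypothesis. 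To actually decide DNT (not just the threshold version) one takes $d=0$; to handle the full ``does a matrix define a metric'' problem one uses the second half of Lemma~\ref{lem: guess} identically, with the same bookkeeping.

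The main obstacle — really the only delicate point — is the ceiling/$o(1)$ bookkeeping in translating between the $N$-parametrized statement of Lemma~\ref{lem: guess} and the $n$-parametrized classes in the theorem: one must check that $\lceil \log_2 N\rceil$ genuinely fits inside $\lceil \frac{1}{2+k}\log_2 n\rceil$ for $n=N^{2+k+o(1)}$ (it does, since $\frac{1}{2+k}\log_2(N^{2+k+o(1)}) = (1+o(1))\log_2 N \ge \log_2 N$ for large $N$, and a constant additive slack in the oracle length only affects the running time by a constant factor), and that the input to the deterministic simulation can be presented in length $n$ so that $DTIME(n^{\cdots})$ is meaningful. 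I would also remark that the same argument, specialized to $k=0$ and combined with Fact~\ref{fact: VWW}, yields the $NDTIME(\lceil\frac12\log_2 n\rceil,n)\nsubseteq DTIME(n^{1.5-\epsilon})$ consequence advertised in the introduction for moderate-size weights.
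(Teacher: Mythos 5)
Your proposal is correct and follows essentially the same route as the paper: invoke Fact~\ref{fact: VWW} to reduce to the negative/threshold triangle problem, place it in $NDTIME(\lceil \log_2 N\rceil, N^{2+k+o(1)})$ via Lemma~\ref{lem: guess}, set $n=N^{2+k+o(1)}$, and derive the contradiction by the same exponent arithmetic (the paper phrases it as a direct contradiction rather than contraposition, which is the same argument). The only detail the paper spells out that you leave implicit is the reconciliation of models --- the hypothesized lower bound is read in the RAM model with logarithmic cost, so the deterministic multi-tape Turing machine obtained from $DTIME(n^{1+\frac{1}{2+k}-\epsilon'})$ is converted to a RAM algorithm via the standard $O(T(n)\log T(n))$ simulation, the extra $\log$ factor being absorbed into the $o(1)$ exactly as your bookkeeping allows.
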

\begin{proof}
\junk{The reductions proving the subcubic-time equivalences
between the eleven problems on the list in Theorem 1.1 in \cite{VWW}
do not introduce new very large edge  weights or matrix
entries, they typically use the edge weights or
matrix entries entities from the reduced
problem. The exception is the use of $+\infty $ or $-\infty$
in case of some problems on the list. However, the latter
can be easily simulated by multiplying the maximum or
minimum of the assumed range by a polynomial in the 
number of vertices or in the size of matrix of the considered
problem. Thus, Theorem 1.1 holds also when the edge weights
or matrix entries in the problems on the list are in the
range $\{-2^{-N^{k+o(1)}},...,2^{-N^{k+o(1)}}\}$.}
By the theorem assumptions  and Fact \ref{fact: VWW}, we infer
that for any $\epsilon >0,$ 
the problem of detecting a negative triangle (DNT), 
when the edge weights are in the range $\{ -2^{N^{k+o(1)}},...,2^{N^{k+o(1)}}\},$
does not admit an $O(N^{3-\epsilon +k +o(1)})$-time
RAM algorithm under the logarithmic cost.
\junk{ On the other hand, in order to guess such a triangle
a non-deterministic $3\lceil \log_2 N \rceil$-bit oracle is sufficient.
A multi-tape Turing machine can easily compute if the triangle is negative
in $poly(n^{o(1)}$ time (one multiplication?).}
On the other hand,
the so restricted
DNT problem is in $NDTIME(\lceil \log_2 N\rceil , N^{2+k+o(1)})$
by Lemma \ref{lem: guess}. 
Consequently, if we assume $N^{k+o(1)}$-bit representation of
the edge weights and
set $n=N^{2+k+o(1)}$ then we conclude that 
the restricted DNT is
in
\newline
$NDTIME(\lceil \frac 1{2+k}\log_2 n\rceil , n).$  Now the proof is by contradiction.
Suppose that 
$NDTIME(\lceil \frac 1{2+k}\log_2 n\rceil , n) \subseteq DTIME(n^{1+\frac 1{2+k}-\epsilon'})$ holds.
Then, the restricted DNT problem admits an 
$O((N^{2+k+o(1)})^{1+\frac 1{2+k}-\epsilon'})$-time
algorithm, i.e., an $O(N^{3+k-(2+k+o(1))\epsilon' +o(1)})$-time algorithm in
the multi-tape (deterministic) Turing machine model.
A multi-tape (deterministic) Turing machine of time complexity
$T(n)\ge n$ can be easily simulated by a RAM with logarithmic cost
running in $O(T(n)\log T(n))$ time
(see, e.g., section 1.7 in \cite{AHU74}).
We infer that the restricted DNT problem admits
an $O(N^{3+k-(2+k+o(1))\epsilon' +o(1)}\log n )$-time, i.e.,
an $O(N^{3+k-(2+k+o(1))\epsilon' +o(1)})$-time algorithm, in the RAM model.
We obtain a contradiction.
\qed
\end{proof}

In particular, when the edge weights or matrix entries are of moderate size,
we obtain the stronger implication of the following form:
for any $\epsilon' >0,$
\newline
$NDTIME(\lceil \frac 12\log_2 n \rceil,\ n)\nsubseteq DTIME(n^{1.5-\epsilon'})$.
In fact,
APSP is assumed to be hard already when
the weights are polynomial \cite{AVY}.
\junk{
\begin{corollary}\label{cor: APSP}
If for any $\epsilon > 0,$
the APSP problem or any of the subcubic-time equivalent
problems, for graphs on $N$ vertices  with edge weights in
$\{ -2^{N^{o(1)}},...,2^{N^{o(1)}}\},$
or $N\times N$ matrices with entries in $\{ -2^{N^{o(1)}},...,2^{N^{o(1)}}\},$ does not
admit an $O(N^{3-\epsilon + o(1)})$-time algorithm then
for any $\epsilon' >0,$
$NDTIME(\lceil \frac 12\log_2 n \rceil,\ n)\nsubseteq DTIME(n^{1.5-\epsilon'})$
holds.
\end{corollary}}

For the simpler DT problem, we obtain similarly
the following implication.

\begin{theorem}\label{theo: DT}
If for any $\delta <  \omega ,$
the problem of detecting a triangle
in a graph  on $N$ vertices does not admit
an $O(N^{\delta+o(1)})$-time algorithm
then for any positive $\delta' < \omega$,
$NDTIME(\lceil \frac 12\log_2 n \rceil ,\ n)\nsubseteq DTIME(n^{\delta'/2})$ holds.
\end{theorem}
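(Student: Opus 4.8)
The plan is to mimic the proof of Theorem~\ref{theo: APSP} almost verbatim, but with the triangle detection problem DT playing the role of DNT and the exponent $\omega$ playing the role of the cubic exponent $3$. First I would record the (trivial) non-deterministic upper bound for DT: to decide whether an $N$-vertex graph contains a triangle, a non-deterministic oracle of $\lceil \log_2 N\rceil$ bits suffices to guess one vertex $v$ of the putative triangle, and a multi-tape Turing machine can then check in $O(N^2)$ time whether $v$ lies in a triangle by examining, for every pair $u,w$, whether $uvw$ is a triangle. (This is the unweighted, $k=0$ special case of the reasoning in Lemma~\ref{lem: guess}, so one could also just invoke that lemma with moderate-size dummy weights, but it is cleaner to state it directly.) Hence DT on $N$ vertices is in $NDTIME(\lceil \log_2 N\rceil,\ N^{2+o(1)})$.

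Next I would reparametrize. Setting $n = N^{2+o(1)}$, so that $N = n^{1/2 - o(1)}$ and $\lceil \log_2 N\rceil = \lceil \tfrac12 \log_2 n\rceil$, we conclude that DT is in $NDTIME(\lceil \tfrac12 \log_2 n\rceil,\ n)$. Now argue by contradiction: suppose that for some positive $\delta' < \omega$ we had $NDTIME(\lceil \tfrac12 \log_2 n\rceil,\ n) \subseteq DTIME(n^{\delta'/2})$. Then DT on $N$ vertices would be solvable by a multi-tape deterministic Turing machine in time $O((N^{2+o(1)})^{\delta'/2}) = O(N^{\delta' + o(1)})$. Simulating the multi-tape Turing machine by a RAM with logarithmic cost (as in the proof of Theorem~\ref{theo: APSP}, using section~1.7 of \cite{AHU74}) costs only an extra logarithmic factor, which is absorbed into the $N^{o(1)}$, so DT would admit an $O(N^{\delta' + o(1)})$-time RAM algorithm. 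Since $\delta' < \omega$, this contradicts the hypothesis that DT does not admit an $O(N^{\delta + o(1)})$-time algorithm for any $\delta < \omega$ (just take $\delta = \delta'$).

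The main thing to be careful about — the only genuine "obstacle" — is the interplay between the two machine models and the $o(1)$ terms: one must check that the multi-tape $\to$ RAM simulation overhead, the rounding in $\lceil \tfrac12 \log_2 n\rceil$, and the $N^{o(1)}$ slack in $n = N^{2+o(1)}$ all combine so that a genuinely sub-$\omega$ exponent $\delta'$ on the deterministic side really does yield a genuinely sub-$\omega$ exponent on the RAM side; the strict inequality $\delta' < \omega$ (rather than $\delta' = \omega - \epsilon$ for a fixed $\epsilon$ as in Theorem~\ref{theo: APSP}) is exactly what gives the room to swallow these lower-order factors. Everything else is routine and parallels the APSP argument, with no new ideas required.
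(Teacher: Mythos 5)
Your proposal is correct and follows essentially the same route as the paper: place DT in $NDTIME(\lceil \log_2 N\rceil, N^{2+o(1)})$ by guessing one triangle vertex (the paper uses $N^2\log N$, a cosmetic difference), pad to $n=N^{2+o(1)}$, assume the containment for some $\delta'<\omega$, and translate the resulting $O(n^{\delta'/2})$ Turing-machine bound back to an $O(N^{\delta'+o(1)})$ RAM bound via the logarithmic-cost simulation, contradicting the hypothesis with $\delta=\delta'$. Your closing remark about the strict inequality $\delta'<\omega$ absorbing the polylog and $o(1)$ slack is exactly the point the paper relies on as well.
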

\begin{proof}
The proof is similar to that of Theorem \ref{theo: APSP}.
First, we observe that the triangle detection problem
specified in the theorem
is in
\newline
$NDTIME(\lceil \log_2 N\rceil , N^{2}\log N)$
by modifying slightly the proof of Lemma \ref{lem: guess}.
The difference is that in case of the triangle detection
problem we do not have edge weights and we need to operate
only on vertex indices of logarithmic size.
Also, the verification if three vertices form a triangle
is easier than that they form a triangle of 
total edge weight smaller than $d$
in an edge weighted graph.

Next, we proceed along the lines of the proof of Theorem \ref{theo: APSP}.   
Namely, we set $n=N^2\log_2 N$ to conclude that 
the triangle detection problem is
in
\newline
 $NDTIME(\lceil \frac 12\log_2 n \rceil, n).$  
In order to obtain a contradiction
suppose that
\newline
$NDTIME(\lceil \frac 12\log_2 n \rceil,\ n)\subseteq DTIME(n^{\delta'/2})$ holds for some $\delta'< \omega$.
Then, the triangle detection problem admits an $O((N^2\log N)^{\delta' /2})$-time
algorithm, i.e., an $O(N^{\delta'} poly(\log N))$-time algorithm in
the multi-tape (deterministic) Turing machine model.
\junk{
Since a multi-tape (deterministic) Turing machine of time complexity
$T(n)\ge n$ can be easily simulated by a RAM under the logarithmic-cost
model running in $O(T(n)\log T(n))$ (see, e.g., section 1.7 in \cite{AHU74}),} 
Hence, we can infer 
(by the same argument as in the proof of Theorem \ref{theo: APSP} ) that the triangle detection problem admits
an
\newline
$O(N^{\delta' } poly(\log N)\log N )$-time, i.e.,
an $O(N^{\delta'+o(1)})$-time algorithm, in the RAM model.
We obtain a contradiction.
\qed
\end{proof}

The asymptotically fastest algorithm for
the detection of an $\ell $-clique in an $N$ vertex
graph (DC-$\ell$) is by a straightforward reduction to the triangle problem.
In particular, if $\ell $ is divisible by $3$, it runs
in $O(n^{\omega \ell /3})$ time \cite{NP85}. If $\ell$ is
not divisible by $3$ one uses also
fast rectangular multiplication
and the expression is more complicated \cite{EG04}.
One can conjecture
that the aforementioned asymptotic time cannot be
substantially improved.
We can easily generalize Theorem \ref{theo: DT}
to include the
consequences of such a conjecture (the main trick is to guess
$\ell - 2$ vertices of the clique),
the details are left to the reader.

\begin{theorem}\label{theo: KT}
Let $\ell \ge 3$ be divisible by $3.$
If for any $\epsilon >0,$
the problem of detecting an $\ell$-clique
in a graph  on $N$ vertices does not admit
an $O(N^{\omega \ell /3 - \epsilon +o(1)})$-time algorithm,
then for any positive $\epsilon' >0$,
$NDTIME((\ell -2)\lceil \frac 12 \log_2 n \rceil ,\ n)\nsubseteq $
\newline
$DTIME(n^{\omega  \ell /6-\epsilon'})$ holds.
\end{theorem}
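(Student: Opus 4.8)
The plan is to follow the proof of Theorem~\ref{theo: DT} almost verbatim, with the single guessed triangle vertex replaced by $\ell-2$ guessed clique vertices, exactly the trick announced just before the statement. First I would show that DC-$\ell$ on an $N$-vertex graph lies in $NDTIME((\ell-2)\lceil \log_2 N\rceil,\ N^{2}\log N)$: a non-deterministic oracle of $(\ell-2)\lceil\log_2 N\rceil$ bits suffices to guess $\ell-2$ vertices $v_1,\dots,v_{\ell-2}$, and a multi-tape Turing machine can then, in $N^{2+o(1)}$ time, first verify that $\{v_1,\dots,v_{\ell-2}\}$ is a clique ($\binom{\ell-2}{2}$ edge look-ups, a constant for fixed $\ell$) and then iterate over all $N^{2}$ ordered pairs $(u,w)$, testing for each whether $\{v_1,\dots,v_{\ell-2},u,w\}$ induces an $\ell$-clique (again $O(1)$ edge tests, each on vertex indices of $O(\log N)$ bits). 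This is the obvious modification of Lemma~\ref{lem: guess}; the weighted-triangle verification there is simply replaced by the cheaper unweighted $\ell$-clique verification.

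Second, as in Theorem~\ref{theo: DT}, I would set $n=N^{2}\log_2 N$. Since $\tfrac12\log_2 n=\log_2 N+\tfrac12\log_2\log_2 N\ge \log_2 N$, we have $NDTIME((\ell-2)\lceil\log_2 N\rceil,\ N^{2}\log N)\subseteq NDTIME((\ell-2)\lceil\tfrac12\log_2 n\rceil,\ n)$, so DC-$\ell$ (as specified in the theorem) lies in $NDTIME((\ell-2)\lceil\tfrac12\log_2 n\rceil,\ n)$.

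Third, the argument is by contradiction. Assuming $NDTIME((\ell-2)\lceil\tfrac12\log_2 n\rceil,\ n)\subseteq DTIME(n^{\omega\ell/6-\epsilon'})$, DC-$\ell$ would admit an $O((N^{2}\log_2 N)^{\omega\ell/6-\epsilon'})$-time, hence an $O(N^{\omega\ell/3-2\epsilon'}\,\mathrm{poly}\log N)$-time, multi-tape Turing machine algorithm. Simulating this machine on a logarithmic-cost RAM costs one extra $\log$ factor, exactly as in the proof of Theorem~\ref{theo: APSP}, which yields an $O(N^{\omega\ell/3-2\epsilon'+o(1)})$-time RAM algorithm for DC-$\ell$. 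Taking $\epsilon=2\epsilon'>0$, this contradicts the hypothesis that for every $\epsilon>0$ the problem admits no $O(N^{\omega\ell/3-\epsilon+o(1)})$-time algorithm.

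I do not expect a real obstacle here; for fixed $\ell$ every piece of bookkeeping ($\binom{\ell}{2}$ edge tests, $O(\log N)$-bit index arithmetic) is absorbed into the $N^{o(1)}$ slack, and the manipulation of exponents is the same as in Theorems~\ref{theo: APSP} and~\ref{theo: DT}. The one point deserving care is keeping the deterministic verification time at $N^{2+o(1)}$: one must guess \emph{exactly} $\ell-2$ vertices, so that only $O(N^{2})$ completions remain to be enumerated, and one must note that each completion test costs only $O(1)$ edge queries precisely because $\ell$ is a constant.
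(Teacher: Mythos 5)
Your proposal is correct and is exactly the route the paper intends: the paper gives no detailed proof of this theorem, stating only that one generalizes Theorem~\ref{theo: DT} by guessing the $\ell-2$ clique vertices and leaving the details to the reader, and your write-up (non-deterministic guess of $(\ell-2)\lceil\log_2 N\rceil$ bits, deterministic $N^{2+o(1)}$-time verification over all pairs, $n=N^{2}\log_2 N$, and the same contradiction/RAM-simulation bookkeeping as in Theorems~\ref{theo: APSP} and~\ref{theo: DT}) supplies precisely those details with the exponent arithmetic done correctly.
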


The MWT problem seems harder than the DT one, it is not clear how fast
arithmetic matrix multiplication could be used here. More generally,
Abboud et al. conjectured in \cite{ABD} that the problem of determining a
minimum weight $\ell$-clique in an edge weighted graph on $N$ vertices (MWC-$\ell$)
does not admit a truly sub-$N^{\ell}$-time algorithm. The derivation
of consequences of this conjecture for the separation between
nondeterminism and determinism is quite analogous to the proof of
Theorem \ref{theo: KT}.  The differences follow from the fact that
now the bound is $N^{\ell}$ instead of $N^{\omega \ell/3}$ and the size
$n$ of the input is $N^{2+k+o(1)}$ (like in Theorem \ref{theo: APSP})
instead of $N^2\log_2 N.$ The proof details are left to the reader.

\begin{theorem}\label{theo: MT}
Let $k\ge 0,$
and let  $Q$ stand for the set of integers
\newline
$\{ -2^{N^{k+o(1)}},...,2^{N^{k+o(1)}}\}.$
Next, let $\ell \ge 3$.
If for any $\epsilon >0,$
the problem of detecting a minimum weight  $\ell$-clique
in a graph with edge weights
in $Q$ and $N$ vertices does not admit
an $O(N^{\ell - \epsilon +k+o(1)})$-time algorithm
then for any positive $\epsilon' >0$,
$NDTIME((\ell -2)\lceil \frac 1{2+k} \log_2 n \rceil,\ n)
\nsubseteq DTIME(n^{1+\frac {\ell-2}{2+k}-\epsilon'})$ holds.
\end{theorem}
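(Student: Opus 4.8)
The plan is to follow the template of the proofs of Theorems~\ref{theo: APSP} and~\ref{theo: KT}, with the cubic exponent $3$ replaced by $\ell$ and the input size set to $N^{2+k+o(1)}$ as in Theorem~\ref{theo: APSP}; the main trick, as in Theorem~\ref{theo: KT}, is to guess $\ell-2$ vertices of the clique. I phrase MWC-$\ell$ in threshold form: given an edge weighted graph $G$ on $N$ vertices with weights in $Q$ and a value $d\in Q$, decide whether $G$ has an $\ell$-clique of total edge weight smaller than $d$. Determining a minimum weight $\ell$-clique reduces to this by binary search over the $O(2^{N^{k+o(1)}})$ candidate clique weights, i.e.\ at the cost of an $N^{k+o(1)}$ factor, which is $N^{o(1)}$ for weights of moderate size.

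First I would generalize Lemma~\ref{lem: guess}, guessing $\ell-2$ vertices rather than one, to obtain that threshold MWC-$\ell$ with weights in $Q$ lies in $NDTIME((\ell-2)\lceil\log_2 N\rceil,\ N^{2+k+o(1)})$: a non-deterministic $(\ell-2)\lceil\log_2 N\rceil$-bit oracle suffices to guess an ordered $(\ell-2)$-tuple of vertex labels, and a multi-tape Turing machine then checks in $O(1)$ adjacency tests that these are pairwise adjacent and, for each of the $O(N^2)$ pairs of remaining vertices, tests whether that pair together with the guessed vertices forms an $\ell$-clique of total weight smaller than $d$ ($O(\ell)=O(1)$ adjacency tests plus the addition of $\binom{\ell}{2}=O(1)$ edge weights, each an $N^{k+o(1)}$-bit integer and hence costing $N^{k+o(1)}$ time), for a total of $N^{2+k+o(1)}$. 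Setting $n=N^{2+k+o(1)}$ as in Theorem~\ref{theo: APSP}, threshold MWC-$\ell$ is then in $NDTIME((\ell-2)\lceil\frac1{2+k}\log_2 n\rceil,\ n)$.

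Finally, arguing by contradiction as in Theorem~\ref{theo: APSP}, I would assume $NDTIME((\ell-2)\lceil\frac1{2+k}\log_2 n\rceil,\ n)\subseteq DTIME(n^{1+\frac{\ell-2}{2+k}-\epsilon'})$ for some $\epsilon'>0$, so that threshold MWC-$\ell$ has an $O\!\left((N^{2+k+o(1)})^{1+\frac{\ell-2}{2+k}-\epsilon'}\right)$-time multi-tape Turing machine algorithm; expanding the exponent of $N$ as $(2+k+o(1))\!\left(1+\frac{\ell-2}{2+k}-\epsilon'\right)=\ell+k-(2+k)\epsilon'+o(1)$ and simulating the multi-tape Turing machine on a RAM under the logarithmic cost with the usual $O(T(n)\log T(n))$ overhead (section~1.7 in~\cite{AHU74}) yields an $O(N^{\ell+k-(2+k)\epsilon'+o(1)})$-time RAM algorithm for threshold MWC-$\ell$ with weights in $Q$, which for $\epsilon=(2+k)\epsilon'>0$ contradicts the hypothesis. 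The exponent bookkeeping and the Turing-machine-to-RAM simulation are verbatim as in Theorem~\ref{theo: APSP}, and the guessing count $\ell-2$ is as in Theorem~\ref{theo: KT}; the one point needing care is the passage between the ``minimum weight'' formulation of the hypothesis and the threshold formulation actually decided by the guessing argument, which is harmless for weights of moderate size (the binary-search factor is $N^{o(1)}$, giving the table entry) and for general $k$ is best handled by reading the $N^{\ell}$-hardness hypothesis, as the reductions of~\cite{ABD} permit, as already referring to the threshold form.
\qed
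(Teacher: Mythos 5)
Your proof is correct and follows exactly the route the paper intends: the paper only sketches this argument, stating that it is analogous to Theorems~\ref{theo: KT} and~\ref{theo: APSP} with the bound $N^{\ell}$ in place of $N^{\omega\ell/3}$ and input size $n=N^{2+k+o(1)}$, and leaves the details (guessing $\ell-2$ vertices, the $NDTIME((\ell-2)\lceil\log_2 N\rceil, N^{2+k+o(1)})$ bound, the exponent arithmetic $(2+k)(1+\frac{\ell-2}{2+k})=\ell+k$, and the Turing-machine-to-RAM simulation) to the reader, which is precisely what you supply. Your explicit discussion of the minimum-weight versus threshold formulation is a point the paper glosses over, and your resolution (binary search costing only $N^{o(1)}$ for moderate weights, threshold reading of the hypothesis for general $k$) is a reasonable way to close that gap.
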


\junk{
\section{Reducing the range}
Consider an instance of the problem of verifying if
an integer $N\times N$ matrix $K$ defines a metric.
Let $e_1,...,e_m$ be a sequence 
of the entries of the matrix $K$
sorted in non-decreasing order. Suppose that $e_i\ge 1$
and $3e_i< e_{i+1}$ for some $i<m.$ Then, we form
the sequence $e'_1,...,e'_i,e'_{i+1},...,e'_m,$
where for $\ell=1,...,i,$ $e'_{\ell}=e_{\ell}$
and for $\ell=i+1,....,m,$
$e'_{\ell}=e_{\ell}-e_{i+1}+3e_i$. 
We shall denote such a transformation of
the sequence  $(e_{\ell})_{\ell=1}^m$ into
the sequence $(e'_{\ell})_{\ell=1}^m$ by $Reduce((e_{\ell})_{\ell=1}^m,i)$.
Let $K'$ be the integer $N\times N$ matrix
such that $K'[u,v]=e'_{\ell}$ iff $K_[u,v]=e_{\ell}.$ 

\begin{lemma}\label{lem: reduce}
The matrix $K$ defines a metric 
if and only if the matrix $K'$ defines a metric.
\end{lemma}
\begin{proof}
\junk{Recall that 
the $N\times N$ matrix $K$ defines a metric
iff  $K[i,j]=
K[j,i]$ for all $i,\ j\in [N],$ and  $K[i,j]\le K[i,k]+K[k,j$
for all $i,\ k,\ j\in [N].$}
It is sufficient show that the symmetry and triangle
inequality conditions hold for $K'$ iff they hold for $K.$

First, let us consider the symmetry condition.
We may assume w.l.o.g. that $K'[u,v]=K[u,v]$
and  $K'[v,u]\neq K[v,u]$ or {\em vice versa}.
By symmetry, it is sufficient to consider
only the former subcase.
We have
$K'[u,v]<K'[v,u]$ by the definition of $K'.$
But then, we have also $K[u,v] < K[v,u]$
since $K'[v,u]<K[v,u].$

Consider in turn the triangle inequality condition.  We may assume
w.l.o.g. that at least one and at most two entries among
$K'[u,v],\ K'[u,w],$ $K'[w,v]$ are different from the corresponding
entries $K[u,v],\ K[u,w],$ $K[w,u]$, respectively.

Suppose first that $K'[u,v]\neq K[u,v].$ 
Then, if $K'[u,w]=K[u,v]$ and $K'[w,v]=K[w,v]$,
we have $K'[u,v]>K'[u,w]+K'[w,v]$ and
consequently $K[u,v]>K[u,w]+K[w,v]$.
Therefore, assume that $K'[u,w]\neq K[u,v]$ and $K'[w,v]=K[w,v]$.
Then, we obtain $(K'[u,w]-K'[u,v])+K'[w,v]$
$=(K[u,w]-K[u,v])+K[w,v]$
so $K'[u,v]\le K'[u,w]+K'[w,v]$ holds iff
$K[u,v]\le K[v,w]+K[w,v]$ holds. The subcase
$K'[u,w]=K[u,v]$ and $K'[w,v]\neq K[w,v]$ follows
by symmetry.

Suppose in turn that $K'[u,v]=K[u,v].$
Then, since we may assume w.l.o.g. that
$K'[u,w]\neq K[u,v]$ or $K'[w,v]\neq K[w,v]$,
we have $K'[u,v]<K'[u,w]+K'[w,v]$ as well as
$K[u,v]<K[u,w]+K[w,v]$.
\qed
\end{proof}
\begin{theorem}\label{theo: reduce}
Let $K$ be an $N\times N$ matrix with integer entries
in $\{0,1,...,M\}.$ We can construct an $N\times N$
matrix $K''$ with integer entries in $\{0,1,....,3^{N^2}\}$
in $O(N^2\log M)$ time such that $K$ defines
a metric if and only if  $K''$ does it.
\end{theorem}
\begin{proof}
Let $e_1,...,e_m$ be a sequence 
of the entries of the matrix $K$
sorted in non-decreasing order.
We run the following algorithm
that iteratively applies the Reduce
operation to the sequence.
\par
\vskip 3pt
\noindent
$(d_{\ell})_{\ell=1}^m\leftarrow (e_{\ell})_{\ell=1}^m$
\par
\noindent
{\bf for} $i=1,...,m-1$ {\bf do}
\par
\noindent
{\bf if} $d_i\ge 1$ and $d_{i+1}>3d_i$ {\bf then}
$(d_{\ell})_{\ell=1}^m\leftarrow  Reduce((d_{\ell})_{\ell=1}^m,i)$
\par
\noindent
Output $(d_{\ell})_{\ell=1}^m$
\par
\vskip 3pt
\noindent
Let $i_0$ be the smallest index $i$ such that $d_i>0$ holds in the output sequence $(d_{\ell})_{\ell=1}^m$.
Note that in the aforementioned sequence, for 
$i=i_0,...,m-1,$ $d_{i+1}\le 3d_i$ holds. Hence, since $m\le N^2,$
all elements in the sequence are in
$\{0,1,....,3^{N^2}\}.$ 

Let $K''$ be the integer $N\times N$ matrix
such that $K''[u,v]=d_{\ell}$ iff $K_[u,v]=e_{\ell}.$ 
By Lemma \ref{lem: reduce}
and induction on the iterations  of the algorithm,
we infer that
$K$ defines a metric iff $K''$ does it.

The Reduce operation can be easily implemented in $O(N^2\log M)$ time
and hence the whole algorithm can be implemented in $O(N^4\log M)$
time. For our purposes, we need however a substantially faster
implementation. The idea is simple.
When a new gap between two consecutive entries
$f,\ g$ qualifying for the next Reduce update is encountered,
we update only the value $t$ by which the entries following the gap
should be decreased. The overall time taken by this faster
implementation depicted below is easily seen to be $O(N^2 \log M).$
\junk{
For this reason, while performing the Reduce operation
for the $r$-th time, we proceed as follows. We stop the $(r-1)$-th
Reduce operation at the first edge $e$ whose weight should be updated
and only pass the value $t$ by which the weights of $e$ and the
following edges in the non-decreasing weight order should be decreased
to the $r$-th Reduce operation. The latter continues to scan the edges
in the aforementioned order starting from the edge $e$ and decreases
their weights by $t$ until a weight gap between two consecutive edges
$f,\ g$ qualifying for the next update is encountered. Then, the new
larger value of $t$ by which the weight of $g$ and the weights of
following edges should be updated is computed and passed it to the
next Reduce operation. The overall time taken by this faster
implementation depicted below is easily seen to be $O(N^2 \log M).$}
\par
\vskip 5pt
\noindent
$(d_{\ell})_{\ell=1}^m\leftarrow (e_{\ell})_{\ell=1}^m$
\par
\noindent
$t\leftarrow 0$
\par
\noindent
{\bf for} $i=1,...,m-1$ {\bf do}
\par
\noindent
$\ \ \ d_i\leftarrow d_i-t$
\par
\noindent
$\ \ \ ${\bf if} $d_i\ge 1$ and $d_{i+1}-t>3d_i$ {\bf then}
$t\leftarrow t+(d_{i+1}-t-3d_i)$
\par
\noindent
Output $(d_{\ell})_{\ell=1}^m$
\qed
\end{proof}
  By combining
  Theorem \ref{theo: reduce} with  Fact \ref{fact: VWW},
we obtain the following theorem.
\junk{
\begin{theorem}\label{theo: APSPf}
If for any $\epsilon > 0,$
the APSP problem or any of the subcubic-time equivalent
problems for graphs with integer edge weights or
integer matrix entries does not
admit a truly subcubic algorithm then
for any $\epsilon' >0,$
$NDTIME(\frac 12\log_2 n,\ n)\nsubseteq DTIME(n^{1.25-\epsilon'}).$
\end{theorem}
\begin{proof}
Let $P$ be a problem on the list that does not admit
a truly subcubic algorithm
when the edge weights or matrix entries are integers.
By Theorem \ref{theo: reduce},
we may assume w.l.o.g. that $P$ has the aforementioned property
when the edge weights or matrix entries are integers
in $\{ -3^{N^{2+o(1)}},...,3^{N^{2+o(1)}}\}.$ More exactly,
$P$ does not admit a substantially subcubic-time
algorithm when the edge weights or matrix entries
are integers in $\{ -2^{N^{k+o(1)}},...,2^{N^{k+o(1)}}\}$
for some $k\in [0,2].$ Now, when we apply
Theorem \ref{theo: APSP}, we obtain the strongest
implication when $k=0$ and the weakest when $k=2.$
Since, we do not $k,$ we  can infer only the
weakest implication.
\qed
\end{proof}}

\begin{theorem}\label{theo: APSPf}
  Let $k>2,$ $\epsilon =\min\{1,k-2\},$
and let  $Q$ stand for the set of integers
\newline 
$\{ -2^{N^{k+o(1)}},...,2^{N^{k+o(1)}}\}.$
The APSP problem or any of the subcubic-time equivalent
problems on the list from \cite{VWW}, for graphs on $N$ vertices  with edge weights in $Q$
or $N\times N$ matrices with entries in $Q$,
admits an $O(N^{3-\epsilon + k+o(1)})$-time algorithm.
\end{theorem}
\begin{proof}
  Consider the MDM problem for  $N\times N$ matrices with  entries in $Q.$
  By Theorem \ref{theo: reduce},
  it reduces in $O(N^{2+k+o(1)})$ time to the MDM problem for  $N\times N$ matrices with
  entries in $\{0,...,2^{N^{2+o(1)}}\}.$
  The latter MDM problem can be solved by a straightforward cubic algorithm in $O(N^{3+2+o(1)})$ time.
We conclude that
  the original MDM problem for $N\times N$ matrices with  entries in $Q.$
  can be solved in $O(N^{3-\epsilon +k+o(1)}$ time,
  where $\epsilon =\min\{1,k-2\}$. Consequently, by Fact \ref{fact: VWW},
  the subcubic equivalent problems on the list with edge weights
  or matrix entries in $Q$ can be also solved in $O(N^{3-\epsilon+k+o(1)})$ time.
  \qed
  \end{proof}

\begin{theorem}\label{theo: APSPf}
Let $k\ge 0,$
and let  $Q$ stand for the set of integers
\newline 
$\{ -2^{N^{k+o(1)}},...,2^{N^{k+o(1)}}\}.$
If for any $\epsilon > 0,$
the APSP problem or any of the subcubic-time equivalent
problems on the list from \cite{VWW}, for graphs on $N$ vertices  with edge weights in $Q$
or $N\times N$ matrices with entries in $Q$, does not
admit an $O(N^{3-\epsilon + k+o(1)})$-time algorithm then
for any $\epsilon' >0,$
\newline
$NDTIME(\lceil \max(\frac 14, \frac 1{2+k})\log_2 n \rceil,\ n)\nsubseteq
DTIME(n^{1+\max(\frac 14, \frac 1{2+k})-\epsilon'})$ holds.
\end{theorem}
\begin{proof}
If $k\le 2$ then the theorem immediately follows from Theorem \ref{theo: APSP}.
So, let us suppose $k>2.$
Let $P$ be a problem on the list that does not admit
a truly subcubic algorithm
when the edge weights or matrix entries are integers
in $Q.$
By Theorem \ref{theo: reduce} and Fact \ref{fact: VWW},
we may assume w.l.o.g. that $P$ has the aforementioned property
when the edge weights or matrix entries are integers
in $\{ -3^{N^{2+o(1)}},...,3^{N^{2+o(1)}}\}.$ 
Now, it is sufficient to apply Theorem \ref{theo: APSP}
with $Q=\{ -2^{N^{2+o(1)}}\,...,2^{N^{2+o(1)}}\}$.
\junk{More exactly,
$P$ does not admit a truly subcubic
algorithm when the edge weights or matrix entries
are integers in $\{ -2^{N^{\ell+o(1)}}\,...,2^{N^{\ell+o(1)}}\}$
for some $\ell\in [0,2].$ Now, when we apply
Theorem \ref{theo: APSP}, we obtain the strongest
implication when $k=0$ and the weakest when $\ell=2.$
Since, we do not $\ell,$ we  can infer only the
weakest implication.}
\qed
\end{proof}
\section{Quantum algorithms for MWT and MDM}
In this marginal section, we present simple quantum algorithms
for finding a triangle of minimum total edge length 
in an edge weighted graph (MWT) and
for verifying if a matrix defines a metric (MDM). Our quantum algorithm
for MWT
is substantially faster than the fastest  known quantum
algorithm for APSP in the general case \cite{NV} and on the other
hand substantially slower than the fastest known
algorithm for for DT 
\cite{MS}, see Table 2 (Appendix). Note that DT can be regarded as
a special case of MWT.                                                                                       

We shall
use a specialized variant
of Grover's search due to D\"{u}rr and H{\o}yer \cite{A04,DH96}
for finding an entry of the minimum value in a table.
\begin{fact}\label{fact:dur}(D\"{u}rr and H{\o}yer \cite{DH96})
Let $T[k],$ $1\le k\le n,$ be an unsorted table where
all values are distinct. Given an oracle for $T,$
the index $k$ for which $T[k]$ is minimum can be 
found by a quantum algorithm with probability
at least $\frac 12$ in $O(\sqrt n)$ time.
\end{fact}

\noindent
{\bf Algorithm 1}
\par
\noindent
{\em Input:} 
an oracle $W$ for the weighted adjacency matrix
representing an edge weighted graph on $N$ vertices
and 
a positive integer $M$ such that the edge weights
are in the range $\{-M,...,M\}.$
\par
\noindent
{\em Output:} 
a minimum weight triangle in the graph.
\par
\noindent
Set an oracle for the virtual table $T[i,j,k],$ $i,j,k\in [N],$
such that $T[i,j,k]=(N+1)^4(W[i,j]+W[i,k]+W[k,j])$ $+(N+1)^2i+(N+1)j+k$ 
if $W[i,j],\ W[i,k],\ W[k,j]$ are defined
and $T[i,j,k]=(N+1)^4(3M+1)$ $+(N+1)^2i+(N+1)j+k$ otherwise
\par
\noindent
Find the indices $i',j',k'$ minimizing $T[i,j,k]$ by using the method from Fact 
\ref{fact:dur}
\par
\noindent
{\bf if} $T[i',j',k']< (n+1)^4(3M+1)$ {\bf then} 
{\bf return} $(i',j',k')$ {\bf else} {\bf return} ``No''
\par
\vskip 4pt
\noindent
Since the values of the entries of the table $T$ in Algorithm 1 are distinct,
the method from Fact \ref{fact:dur} can be applied to $T.$ Hence, by Fact \ref{fact:dur}, we obtain
the following theorem.

\begin{theorem}
Let $G$ be a graph on $N$ vertices with integer edge weights. 
Given an oracle for the weighted adjacency matrix of $G,$
a minimum weight triangle in $G$ (if any) can be detected
by a quantum algorithm with high probability
in $\tilde{O}(N^{1.5})$ time.
\end{theorem}

In similar fashion, we obtain a quantum algorithmic solution
to the problem of verifying if a matrix defines a metric.

\begin{theorem} 
  Given an oracle for an $N\times N$ integer matrix,
  the problem of verifying if the matrix
  defines a metric admits an $\tilde{O}(N^{1.5})$-time quantum algorithm.
  \end{theorem}
\begin{proof}
  The algorithm is similar to that
  quantum for MWT, i.e., Algorithm 1.
  For each of the four properties that the matrix, say $K,$ should
  have, we form a virtual table with distinct values.
  On the base of the minimum value of an entry in the table,
  we can decide if the property holds. For searching for
  the minimum, we use again Fact \ref{fact:dur}.
  The largest of the virtual tables is of cubic in $N$ size
  and it corresponds to the triangle inequality property.
  For $i,j,k\in [N],$ the virtual table is defined by
  $T[i,j,k]=(N+1)^4(K[i,k]+K[k,j]-K[i,j])$ $+(N+1)^2i+(N+1)j+k$.
  Note that all values of the entries of $T$ are distinct
  and that the triangle inequality is violated by $K$ if
  and only if the minimum value of an entry of $T$ is
  negative. The virtual quadratic table $U$ corresponding to the
  symmetry property is given by
  $U[i,j]=(N+1)^4(U[i,j]-U[j,i])^2$ $+(N+1)^2i+(N+1)j+k$ for
  $i,j \in [N].$ We leave the rest of details to the reader.
  The application of the quantum search from Fact \ref{fact:dur}
  to the verification of the triangle inequality property dominates
  the time complexity.
  \qed
\end{proof}

\begin{table*}[t]
\begin{center}
\begin{tabular}{||c|c|c||} \hline \hline
problem & ~upper bound~ & author
\\ \hline \hline
APSP  &  $\tilde{O}(N^{2.5})$
& Navebi and Vassilevska Williams  \cite{NV}
\\ \hline
MWT & $\tilde{O}(N^{1.5})$ & This paper
\\ \hline
MDM  & $\tilde{O}(N^{1.5})$ & This paper
	\\ \hline
DT  & $\tilde{O}(N^{9/7})$ & Lee, Magniez, and Santha \cite{MS}
\\ \hline
3SUM & $\tilde{O}(N^{1+o(1)})$ & Ambainis \cite{A04}
\\ \hline
3SAT & $1.153^Npoly(N)$ & Ambainis \cite{A04}
\\ \hline \hline
\end{tabular}
\label{table: 1}
\vskip 0.5cm
\caption{Known upper bounds on the time complexity
  of quantum algorithms for problems discussed in this paper.
  In case of 3SAT $N$ stands for the number of variables
  in the input formula. }
\end{center}
\end{table*}
}
\section{Implication of 3SUM hardness}

Gajentaan and Overmars \cite{GO} exhibited a large class of geometric problems
that were the so called 3SUM hard, i.e., if any of them admitted
a substantially subquadratic algorithm then 3SUM would also have a substantially subquadratic algorithm.
The aforementioned class has been subsequently expanded by not necessarily geometric
problems (e.g., \cite{DKP,LPV,VWW}). In this section,
we show that if 3SUM, for integers within a bounded (up to super-exponential) range,
does not admit a truly subquadratic algorithm then a strong separation between deterministic
and non-deterministic time holds. The proofs in this section are similar to those
from Section 3. The key trick in the proof of the following lemma is analogous to that
in the footnote on 3SUM on page 2 in \cite{Ryan}.

\begin{lemma}\label{lem: guess1}
The 3SUM problem, when the input $N$ numbers
are integers in $\{ -2^{N^{k+o(1)}},...,2^{N^{k+o(1)}}\}$ for some $k\ge 0,$
is  in $NDTIME(\lceil \log_2 N\rceil , N^{1+k+o(1)})$.
\end{lemma}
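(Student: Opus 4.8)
The plan is to mimic the classical $O(N^2)$ deterministic algorithm for 3SUM, which reduces a 3SUM instance to $N$ separate 2SUM queries, and to use the nondeterministic oracle merely to select one of those $N$ queries. So the first step would be to have the multi-tape Turing machine sort the $N$ input integers in nondecreasing order, carrying along each integer's original index. Each integer has an $N^{k+o(1)}$-bit representation, and a comparison-based merge sort proceeds in $\lceil\log_2 N\rceil$ passes, each pass streaming through and rearranging the $N\cdot N^{k+o(1)}$ bits of data while performing $O(N)$ comparisons of $N^{k+o(1)}$-bit integers; hence the sort runs in $O(N^{k+o(1)}\cdot N\log N)=N^{1+k+o(1)}$ time on a multi-tape machine, with the $\log N=N^{o(1)}$ factor absorbed into the exponent.

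Next I would use the $\lceil\log_2 N\rceil$-bit nondeterministic oracle to guess a single index $i\in[N]$, the intended index of one element of a zero-sum triple. It then remains to decide deterministically whether the sorted list contains two entries $a_j,a_k$ with $a_j+a_k=-a_i$, subject to whatever distinctness requirement the chosen formulation of 3SUM imposes on $i,j,k$ — a matter of trivial bookkeeping using the stored indices. This is a single 2SUM query with target $-a_i$, which I would answer by the standard two-pointer sweep: keep the sorted list on one work tape with a head near its left end and a second copy of it on another work tape with a head near its right end, and at each step compare the sum of the two currently scanned integers with $-a_i$, advancing the left head rightward when the sum is too small and the right head leftward when it is too large, stopping when the heads meet. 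The heads move monotonically, so the sweep takes $O(N)$ steps, each consisting of one addition and one comparison of $N^{k+o(1)}$-bit integers and one head move past a single list element, i.e.\ $N^{k+o(1)}$ time, for a total of $N^{1+k+o(1)}$. The machine accepts iff the sweep succeeds. Correctness of two-pointer 2SUM shows that for every guessed $i$ the sweep succeeds exactly when $-a_i$ is a sum of two admissible entries, so some nondeterministic branch accepts iff the instance is a yes-instance; the running time along any branch is $N^{1+k+o(1)}$ and the oracle length is $\lceil\log_2 N\rceil$, as claimed.

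The one point needing care is that $NDTIME$ is defined via multi-tape Turing machines rather than a random-access model, so one cannot afford a $\Theta(\log N)$-comparison binary search for $-a_i-a_k$ (locating the midpoint of a tape-resident sorted list costs a full scan, which would blow the bound up to $\Theta(N^2)\cdot N^{k+o(1)}$). Committing to merge sort followed by a monotone two-pointer sweep sidesteps this, since every head motion involved is ultimately linear in the $N\cdot N^{k+o(1)}$ total bit-length of the data. The other thing worth emphasizing is that the guess is confined to the \emph{single} index $i$ rather than to all three indices of a triple; this is precisely what keeps the nondeterministic oracle at $\lceil\log_2 N\rceil$ bits (guessing three indices would cost $3\lceil\log_2 N\rceil$), and it is the key trick referred to just before the lemma.
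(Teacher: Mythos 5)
Your proposal is correct and matches the paper's argument: the paper likewise guesses one element of the zero-sum triple with the $\lceil\log_2 N\rceil$-bit oracle, sorts the input in $N^{1+k+o(1)}$ time, and then runs the same opposite-direction two-tape (two-pointer) sweep to test for a pair summing to the negated guessed value. Your added remarks about avoiding binary search on a tape and about guessing only one index (not three) are exactly the considerations implicit in the paper's proof.
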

\begin{proof}
In order to guess a number $q$ that belongs to a triple of numbers
whose sum is zero, a non-deterministic $\lceil \log_2 N \rceil$-bit
oracle is sufficient.  A multi-tape Turing machine can verify
if the guessed number $q$ belongs to such a triple as follows.
First, it sorts the input numbers in non-decreasing order
in $N^{1+k+o(1)}$ time. Next, it places two copies of
the sorted sequence on two tapes and moves its heads over the tapes
in opposite directions starting from the opposite ends.
When its head over the first tape advances to a number $r$
then its head over the second tape advances in opposite
direction to check if the number $-q-r$ occurs in the sorted
sequence, using auxiliary tapes.
The whole process takes $N^{1+k+o(1)}$ time.
In this way, the Turing machine
can verify if $q$ belongs to a triple of numbers summing to zero
in $N^{1+k+o(1)}$ time.
\qed
\end{proof}

The proof of the following theorem is analogous to that
of Theorem \ref{theo: APSP} in Section 3.

\begin{theorem}\label{theo: 3SUM}
 If for any $\epsilon>0,$ the 3SUM problem, when the input $N$ numbers
 are integers in $\{ -2^{N^{k+o(1)}},...,2^{N^{k+o(1)}}\},$ for some $k\ge 0,$
 does not admit 
 an $O(N^{2-\epsilon + k+o(1)})$-time algorithm then
 for any $\epsilon' >0,$
 \newline
 $NDTIME(\lceil \frac 1{1+k}\log_2 n \rceil, n) \nsubseteq $
 $DTIME(n^{1+\frac 1{1+k}-\epsilon'})$ holds.
\end{theorem}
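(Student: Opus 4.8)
The plan is to mimic the proof of Theorem~\ref{theo: APSP} almost verbatim, with the cube replaced by the square and the input-size bookkeeping adjusted from $N^{2+k+o(1)}$ to $N^{1+k+o(1)}$. First I would invoke Lemma~\ref{lem: guess1} to place the restricted 3SUM problem (input $N$ integers in $\{-2^{N^{k+o(1)}},\dots,2^{N^{k+o(1)}}\}$) in $NDTIME(\lceil \log_2 N\rceil, N^{1+k+o(1)})$. Then I would set $n=N^{1+k+o(1)}$, so that $\lceil \log_2 N\rceil = \lceil \frac{1}{1+k}\log_2 n\rceil$ (up to the $o(1)$ absorbed into the exponent), and conclude that the restricted 3SUM problem lies in $NDTIME(\lceil \frac{1}{1+k}\log_2 n\rceil, n)$.

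Next, the argument is by contradiction: suppose $NDTIME(\lceil \frac{1}{1+k}\log_2 n\rceil, n)\subseteq DTIME(n^{1+\frac{1}{1+k}-\epsilon'})$ for some $\epsilon'>0$. Then the restricted 3SUM problem admits an $O\big((N^{1+k+o(1)})^{1+\frac{1}{1+k}-\epsilon'}\big)$-time multi-tape Turing machine algorithm, which simplifies to $O(N^{2+k-(1+k+o(1))\epsilon'+o(1)})$ time. Passing from the multi-tape Turing machine to the RAM model under logarithmic cost costs only an $O(\log T(n))$ factor (section~1.7 in \cite{AHU74}), i.e. an $N^{o(1)}$ factor here, so the restricted 3SUM problem admits an $O(N^{2-\epsilon+k+o(1)})$-time RAM algorithm with $\epsilon = (1+k)\epsilon'/2 > 0$ (any $\epsilon$ strictly below $(1+k)\epsilon'$ works once the $o(1)$ terms are absorbed). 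This contradicts the hypothesis that no such algorithm exists, completing the proof.

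The only genuinely delicate point — and the step I would be most careful about — is the bookkeeping of the $o(1)$ terms and the choice of $\epsilon$ in terms of $\epsilon'$: one must check that the $N^{o(1)}$ slack coming from the sorting step in Lemma~\ref{lem: guess1}, from the identification $n=N^{1+k+o(1)}$, and from the Turing-machine-to-RAM simulation can all be swallowed while still leaving a genuine polynomial saving $N^{-\epsilon}$ with $\epsilon>0$. Since the hypothesis quantifies over \emph{all} $\epsilon>0$, it suffices to exhibit one positive $\epsilon$ for each fixed $\epsilon'>0$, which the computation above does. Everything else is a routine transcription of the Theorem~\ref{theo: APSP} argument with exponent $3$ replaced by $2$.
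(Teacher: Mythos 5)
Your proof is correct and follows the paper's argument essentially verbatim: invoke Lemma~\ref{lem: guess1}, set $n=N^{1+k+o(1)}$, argue by contradiction, and undo the multi-tape-to-RAM simulation at the end. The only addition is your more explicit bookkeeping of the $o(1)$ slack and the choice $\epsilon=(1+k)\epsilon'/2$, which the paper leaves implicit but which is the right way to make the contradiction airtight.
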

\begin{proof}
By the theorem assumptions, we infer
that for any $\epsilon >0$,
the 3SUM problem, 
when the input numbers are in the range $\{-2^{-N^{k+o(1)},...,2^{-N^{k+o(1)}}\}},$
does not admit an $O(N^{2-\epsilon+k +o(1)})$-time
RAM algorithm under the logarithmic cost.
On the other hand,
the so restricted
3SUM problem is in
\newline
$NDTIME(\lceil \log_2 N\rceil , N^{1+k+o(1)})$
by Lemma \ref{lem: guess1}. 
Consequently, if we assume $N^{1+k+o(1)}$-bit representation
of the input numbers and
set $n=N^{1+k+o(1)}$ then we can conclude that 
the restricted 3SUM is
in
$NDTIME(\lceil \frac 1{1+k}\log_2 n \rceil, n).$  Now the proof is by contradiction.
Suppose that 
$NDTIME(\lceil \frac 1{1+k}\log_2 n \rceil, n)\subseteq$
$DTIME(n^{1+\frac 1{1+k}-\epsilon'})$ holds.
Then, the restricted 3SUM problem admits an 
$O((N^{1+k+o(1)})^{1+\frac 1{1+k}-\epsilon'})$-time
algorithm, i.e., an
$O(N^{2+k-(1+k+o(1))\epsilon' +o(1)})$-time algorithm in
the multi-tape (deterministic) Turing machine model.
A multi-tape (deterministic) Turing machine of time complexity
$T(n)\ge n$ can be easily simulated by a RAM with logarithmic cost
running in $O(T(n)\log T(n))$ time \cite{AHU74}.
We conclude that the restricted 3SUM problem admits
an
\newline
$O(N^{2+k-(1+k+o(1))\epsilon' +o(1)}\log N )$-time, i.e.,
an
$O(N^{2+k-(1+k+o(1))\epsilon' +o(1)})$-time algorithm, in the RAM model.
We obtain a contradiction.
\qed
\end{proof}

In fact, 3SUM is known to be hard,
under randomized reductions, when $k=0,$
more precisely, already when the numbers are in
$\{-n^3,…,n^3\}$ \cite{ALW}. The “strong 3SUM conjecture” even states that it
is hard when the numbers are in $\{-n^2,…,n^2\}.$

\section{Extension to ETH and SETH}

We can even easily derive implications of similar form from
the Exponential Time Hypothesis (ETH), involving a logarithmic
non-deterministic oracle. Let $kSAT(\ell)$ denote the kSAT
problem (see \cite{AHU74})
for instances with $\ell$ variables and distinct clauses.
Roughly, ETH conjectures that $3SAT(\ell)$ requires $2^{\Omega(\ell)}$
(deterministic) time  \cite{IP} while its strong version SETH  \cite{CIP} conjectures
that when $k$ tends to infinity than the time complexity
of $kSAT(\ell)$ tends to at least $2^{\ell}$ 

\begin{lemma}\label{lem: eth}
  $3SAT(\ell)$  
  is in $NDTIME(\ell, poly(\ell)).$
\end{lemma}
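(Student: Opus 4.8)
The plan is to exhibit a simple nondeterministic procedure for $3SAT(\ell)$ that guesses an assignment and then verifies it deterministically, and to account carefully for the sizes involved. First I would observe that an instance of $3SAT(\ell)$ with $\ell$ variables and distinct clauses has at most $O(\ell^3)$ clauses, hence an input of size $poly(\ell)$; so the deterministic verification budget $poly(\ell)$ is exactly the natural one. The nondeterministic oracle is used to guess a truth assignment to the $\ell$ variables, which requires exactly $\ell$ bits --- one per variable. This matches the claimed oracle size $\ell$ (up to the ceiling already absorbed since $\ell$ is an integer).

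Next I would describe the deterministic phase. Given the guessed assignment stored on a work tape, a multi-tape Turing machine scans the clause list once; for each clause it looks up the truth values of its (at most three) literals and checks whether at least one literal is satisfied. Since each clause has constant size and there are $poly(\ell)$ clauses, and a single lookup of a variable's guessed value costs at most $O(\ell)$ tape movement (or $O(\log \ell)$ if variables are addressed by index and the assignment is kept sorted, but either way it is $poly(\ell)$), the whole verification runs in $poly(\ell)$ time on a multi-tape machine. The machine accepts iff every clause is satisfied; it accepts on some oracle string iff the formula is satisfiable, which is precisely the definition of membership in $NDTIME(\ell, poly(\ell))$.

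The only point needing a little care --- and the one I would flag as the ``main obstacle,'' though it is minor --- is the bookkeeping that ties the $\ell$-bit oracle to the $poly(\ell)$-bit input: one must make sure that reading the guessed assignment and locating variables inside clauses stays within the $poly(\ell)$ time bound on a \emph{multi-tape} Turing machine rather than a RAM, which is why I would either keep the $\ell$-bit assignment on a dedicated tape indexed in unary or pay the modest $O(\ell)$-per-lookup overhead; in both cases the total is still $poly(\ell)$. With this accounting the lemma follows immediately, and I would note that the same argument generalizes verbatim to $kSAT(\ell)$ for any fixed $k$, giving $kSAT(\ell)\in NDTIME(\ell, poly(\ell))$, which is what the subsequent SETH implication will need.
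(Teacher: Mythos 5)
Your proof is correct and follows exactly the paper's argument: guess the $\ell$-bit assignment non-deterministically, note that with distinct clauses there are at most $O(\ell^3)$ (the paper says $8\binom{\ell}{3}$) clauses, and verify deterministically in $poly(\ell)$ time. Your additional bookkeeping about multi-tape lookup costs is a harmless elaboration of the same approach.
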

  \begin{proof}
  In order to guess an assignment (if any) satisfying
  the input $3SAT(\ell)$ formula,
  a non-deterministic $\ell$-bit oracle is enough.
  Now it is sufficient to observe that w.l.o.g. the number of clauses
  in the input formula is at most $8\binom {\ell} 3.$
  \qed
\end{proof}

\begin{theorem}\label{theo: eth}
If ETH holds then there is $k>0$ such that
\newline
$NDTIME(\lceil k\log_2 n \rceil , n) \nsubseteq DTIME(n).$
\end{theorem}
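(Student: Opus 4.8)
The plan is to derive the containment from the Exponential Time Hypothesis by a padding argument that converts the sub-exponential lower bound for $3SAT(\ell)$ into a polynomial-time lower bound for a padded version of the problem, while keeping the non-deterministic oracle logarithmic in the padded input size. First I would recall from Lemma \ref{lem: eth} that $3SAT(\ell)$ lies in $NDTIME(\ell, poly(\ell))$: one guesses an assignment using an $\ell$-bit non-deterministic oracle and a multi-tape Turing machine verifies it in $poly(\ell)$ time on an instance with at most $8\binom{\ell}{3}$ distinct clauses. The idea is then to pad: given a $3SAT(\ell)$ instance, append $2^{c\ell}$ blank symbols so that the new input length is $n = 2^{c\ell} + poly(\ell) = 2^{\Theta(\ell)}$, for a constant $c>0$ to be fixed. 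On the padded instance the verification still takes $poly(\ell) = poly(\log n)$ time plus the time to scan the padding, which is $O(n)$; hence the padded language $L$ is in $NDTIME(O(n), n)$, and crucially the oracle length is only $\ell = \Theta(\log n)$, so $L \in NDTIME(\lceil k' \log_2 n \rceil, n)$ for a suitable constant $k'$ depending on $c$.

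Next I would argue by contradiction: suppose $NDTIME(\lceil k \log_2 n \rceil, n) \subseteq DTIME(n)$ for every $k>0$, in particular for $k = k'$. Then the padded language $L$ has a deterministic multi-tape Turing machine running in $O(n)$ time, i.e., in $O(2^{c\ell})$ time. Unpadding, this yields a deterministic algorithm for $3SAT(\ell)$ running in $O(2^{c\ell})$ time (we simply pad internally and run the fast machine). Since $c$ was a free constant in our construction, choosing $c$ small enough gives a $2^{o(\ell)}$-time deterministic algorithm for $3SAT$ — more carefully, for every constant $c>0$ we get a $2^{c\ell}$-time algorithm, which together with the standard sparsification lemma contradicts ETH, whose precise statement is that there is a fixed $\delta>0$ with no $2^{\delta \ell}$-time algorithm for $3SAT$. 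Taking $c < \delta$ (equivalently $k$ chosen so that the induced $c$ is below the ETH threshold) produces the contradiction, and this fixes the value of $k$ in the theorem statement.

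The step I expect to be the main obstacle is bookkeeping the relationship between the padding parameter $c$, the resulting oracle-length constant $k$, and the ETH constant $\delta$, together with the Turing-machine-versus-RAM simulation overhead already used in Theorem \ref{theo: APSP} (a multi-tape deterministic Turing machine of time $T(n)\ge n$ simulates on a logarithmic-cost RAM in $O(T(n)\log T(n))$ time, and conversely). One must check that the number of clauses bound $8\binom{\ell}{3}$ keeps the honest input size polynomial in $\ell$ so that the padding dominates and $\log n = \Theta(\ell)$ holds cleanly; one must also ensure that ETH is invoked in the form that rules out $2^{c\ell}$ time for all $c>0$ after sparsification, rather than only for instances with a linear number of clauses. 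None of these is deep, but stating the theorem with a concrete $k$ requires pinning down exactly how small $c$ must be, so I would phrase the construction with $c$ as a parameter, instantiate the hypothetical containment at the corresponding $k = k(c)$, and then let $c \to 0$ within the range permitted by ETH to conclude.
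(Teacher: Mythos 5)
Your proposal is correct and follows essentially the same route as the paper: use Lemma \ref{lem: eth}, pad the $3SAT(\ell)$ instance to length $n = 2^{c\ell}$ so the $\ell$-bit oracle becomes $\lceil \tfrac{1}{c}\log_2 n\rceil$ bits, and unpad a hypothetical $DTIME(n)$ algorithm to obtain a $2^{c\ell}$-time deterministic algorithm for 3SAT, contradicting ETH once $c$ is taken at or below the ETH constant --- the paper simply fixes $c = k_0$ and $k = 1/k_0$ directly. Your appeal to the sparsification lemma is unnecessary (the clause bound $8\binom{\ell}{3}$ already keeps the unpadded input polynomial in $\ell$), but this does not affect correctness.
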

  \begin{proof}
  By ETH, there is a constant $k_0$ such that
  $3SAT(\ell)$ does not admit any $2^{k_0\ell}poly(\ell)$-time algorithm.
  Let $T(\ell)$ be the time taken by the non-deterministic algorithm for $3SAT(\ell)$
  from Lemma \ref{lem: eth}. 
  We may assume w.l.o.g. that $\ell$ is enough large so
  the inequality $T(\ell)\le 2^{k_0\ell}$ holds. Hence, it follows from
  Lemma \ref{lem: eth} that $3SAT(\ell)$ is in $NDTIME(\ell , 2^{k_0\ell}).$ Let us set
  $n=2^{k_0\ell}.$  Then, appropriately padded
  $3SAT(\ell)$ is in $NDTIME(\lceil k_1\log_2 n \rceil , n)$
  for $k_1=\frac 1 {k_0}$
  and the containment $NDTIME(k_1\log_2 n , n) \subseteq DTIME(n)$
  cannot hold since it would contradict the non-existence of
  $2^{k_0\ell}poly(\ell)$-time algorithm for $3SAT(\ell).$
  \qed
  \end{proof}
\junk{  
By assuming the strong ETH (SEHT) \cite{CI}
instead of ETH the implication of
Theorem \ref{theo: eth} can be easily strengthened
so for a given $\delta >1,$
the coefficient $k$ at $\log_2 n$ is in $(1,\delta).$
}
The Orthogonal Vectors problem in dimension
$d$ ($OV(d)$) is for two sets $A$, $B$ of  vectors
in $\{ 0,1\}^d$ to detect a pair of vectors $a\in A$ and $b\in B$
such that $a,\ b$ are orthogonal in $Z^d.$

\begin{lemma}\label{lem: ov}
  For any natural number $d,$ the $OV(d)$ problem is in
  \newline
  $NDTIME(\lceil \log N \rceil, Nd).$
\end{lemma}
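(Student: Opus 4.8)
The plan is to mimic exactly the structure of Lemma~\ref{lem: guess1} and Lemma~\ref{lem: eth}: use the non-deterministic oracle to guess the index of one witness vector, and then have the multi-tape Turing machine deterministically scan for a matching partner.

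First I would set up notation: let $A,B\subseteq\{0,1\}^d$ with $|A|,|B|\le N$ (if the two sets have different sizes, take $N$ to be their maximum). The key observation is that to certify a ``yes'' instance of $OV(d)$ it suffices to exhibit \emph{one} vector $a\in A$ that participates in an orthogonal pair; the partner $b\in B$ can then be found by brute-force search over $B$. Since $|A|\le N$, the index of $a$ within $A$ is specified by $\lceil\log_2 N\rceil$ bits, so a non-deterministic $\lceil\log N\rceil$-bit oracle is sufficient to guess it.

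Next I would describe the deterministic verification. Having guessed an index $i$, the Turing machine copies the $i$-th vector $a\in A$ (which has $d$ bits) onto a work tape. It then streams through the vectors of $B$ one at a time; for each $b\in B$ it computes the inner product $\langle a,b\rangle = \sum_{t=1}^d a_t b_t$ over the integers by a single left-to-right pass over the $d$ coordinates, maintaining a running count, and checks whether the result is $0$, i.e.\ whether no coordinate has $a_t=b_t=1$. If some $b$ yields inner product $0$ the machine accepts (and may output the pair $(a,b)$); otherwise it rejects. Reading $a$ costs $O(d)$, and for each of the at most $N$ vectors of $B$ the inner-product pass costs $O(d)$ on a multi-tape machine (coordinates can be compared in lock-step across two tapes), so the total running time is $O(Nd)$, plus the $O(d)$ to fetch $a$. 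This establishes that $OV(d)\in NDTIME(\lceil\log N\rceil, Nd)$.

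I do not expect a genuine obstacle here; the statement is essentially definitional once one adopts the ``guess one witness, verify by scanning'' paradigm already used twice in this section. The only mild point of care is the multi-tape bookkeeping: one must check that comparing the $i$-th coordinate of $a$ against the $i$-th coordinate of $b$ for all $i$ can be done in $O(d)$ total time on a multi-tape machine, which it can by keeping $a$ and $b$ aligned on two tapes and sweeping both heads in tandem. An alternative, if one prefers to avoid even this, is to note that $d$ is a fixed natural number in the statement, so any polynomial-in-$d$ overhead per vector of $B$ is absorbed into an $O(N\,\mathrm{poly}(d))$ bound; but the sharper $O(Nd)$ bound follows from the tandem-sweep argument.
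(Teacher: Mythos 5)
Your proposal is correct and follows exactly the paper's argument: guess (via the $\lceil\log N\rceil$-bit oracle) one vector $a\in A$ belonging to an orthogonal pair, then deterministically compute the inner product of $a$ with every vector of $B$ in $O(Nd)$ time on a multi-tape machine. The additional tape-alignment details you supply are fine (note only that locating the guessed vector on the input tape may itself cost a scan of $O(Nd)$ rather than $O(d)$, which still fits the stated bound).
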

  \begin{proof}
   To guess a vector $a\in A$ which is orthogonal
  to some vector $b\in B$, a $\lceil \log N \rceil$-bit
  non-deterministic oracle is sufficient. It remains to compute
  the inner product of $a$ with each vector in $B.$  This
  can be easily accomplished by a multi-tape Turing
  machine in $O(Nd)$ time.
  \qed
  \end{proof}

The low-dimension OV conjecture asserts that the $OV(d)$ problem
does not admit a truly sub-$N^2$-time algorithm when
$d=\omega (\log N)$ \cite{GIK}. It is implied by SETH \cite{Ryan1}.

\begin{theorem}\label{theo: seth}
  If the low-dimension OV conjecture holds, and hence if
  SETH holds, then for any $\epsilon>0,$
  $NDTIME(\lceil \log_2 n \rceil, n)\nsubseteq
  DTIME(n^{2-\epsilon})$ holds.
\end{theorem}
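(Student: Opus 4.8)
The plan is to follow the proof of Theorem~\ref{theo: 3SUM} almost verbatim, with Lemma~\ref{lem: ov} playing the role of Lemma~\ref{lem: guess1}. First I would fix a function $d=d(N)$ that is $\omega(\log N)$ yet still $N^{o(1)}$, for instance $d=\lceil \log_2^2 N\rceil$, so that the low-dimension OV conjecture of \cite{GIK} applies to $OV(d)$. By Lemma~\ref{lem: ov}, the $OV(d)$ problem is in $NDTIME(\lceil \log_2 N\rceil, Nd)$ on a multi-tape Turing machine. Setting $n=Nd=N^{1+o(1)}$ gives $N=n^{1-o(1)}$, hence $\lceil \log_2 N\rceil\le\lceil \log_2 n\rceil$, and since the input has $\Theta(Nd)=\Theta(n)$ bits we conclude that (the appropriately padded) $OV(d)$ lies in $NDTIME(\lceil \log_2 n\rceil, n)$.

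Now the argument is by contradiction. Suppose $NDTIME(\lceil \log_2 n\rceil, n)\subseteq DTIME(n^{2-\epsilon})$ for some $\epsilon>0$. Then $OV(d)$ admits an $O((Nd)^{2-\epsilon})$-time, i.e.\ an $O(N^{2-\epsilon+o(1)})$-time, multi-tape deterministic Turing machine algorithm. Using, exactly as in the proof of Theorem~\ref{theo: APSP}, the simulation of a multi-tape Turing machine of running time $T(n)\ge n$ by a logarithmic-cost RAM in $O(T(n)\log T(n))$ time (see section~1.7 in \cite{AHU74}), this yields an $O(N^{2-\epsilon+o(1)}\log N)=O(N^{2-\epsilon+o(1)})$-time RAM algorithm for $OV(d)$ with $d=\omega(\log N)$. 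For all sufficiently large $N$ the exponent $2-\epsilon+o(1)$ is below $2-\epsilon/2$, so this is a truly sub-$N^2$-time algorithm and contradicts the low-dimension OV conjecture. Finally, since SETH implies the low-dimension OV conjecture \cite{Ryan1}, the same conclusion holds under SETH.

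The step I expect to require the most care is the bookkeeping of the $o(1)$ terms in the choice of $d$: one needs $d$ superlogarithmic so that the conjecture is applicable, but subpolynomial so that $n=Nd$ stays within an $N^{o(1)}$ factor of $N$, which is what simultaneously keeps the oracle length $\lceil \log_2 N\rceil$ at most $\lceil \log_2 n\rceil$ and keeps the hypothetical running time at $N^{2-\epsilon+o(1)}$; one then has to confirm that $N^{2-\epsilon+o(1)}$ genuinely beats $N^2$ by a fixed polynomial factor. Everything else is identical to the arguments of Section~3.
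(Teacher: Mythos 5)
Your proposal is correct and follows essentially the same route as the paper's own proof: place $OV(d)$ with $d=\omega(\log N)$, $d=N^{o(1)}$ in $NDTIME(\lceil \log_2 n\rceil, n)$ via Lemma~\ref{lem: ov} with $n=Nd$, then derive from the assumed containment an $O(N^{2-\epsilon+o(1)})$-time RAM algorithm (via the $O(T\log T)$ simulation) contradicting the low-dimension OV conjecture, which SETH implies. Your extra bookkeeping of the $o(1)$ terms and the explicit choice $d=\lceil\log_2^2 N\rceil$ only make explicit what the paper leaves implicit.
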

\begin{proof}
Consider the OV problem, where $d=\omega (\log N)$ and on the other
hand, $d=N^{o(1)}.$ By the theorem assumption, it does not
admit a truly sub-$N^{2}$-time algorithm. Suppose that for some
$\epsilon >0,$ $NDTIME(\lceil \log_2 n \rceil, n)\subseteq
DTIME(n^{2-\epsilon})$ holds. Set $n=Nd.$ It follows from Lemma \ref{lem: ov}
that the OV problem is in $NDTIME(\lceil \log_2 N \rceil, Nd)\subset
NDTIME(\lceil \log_2 n \rceil, n)$. Hence, it can be solved
by a Turing machine operating in $O(n^{2-\epsilon})$ time,
and consequently by a RAM with logarithmic cost in
$O((N^{1+o(1)})^{2-\epsilon})\log N)=O(N^{2-\epsilon +o(1)})$ time.
We obtain a contradiction.
\qed
\end{proof}

\section{Quantum algorithms for MWT and MDM}
In this marginal section, we present simple quantum algorithms
for finding a triangle of minimum total edge length 
in an edge weighted graph (MWT) and
for verifying if a matrix defines a metric (MDM). Our quantum algorithm
for MWT
is substantially faster than the fastest  known quantum
algorithm for APSP in the general case \cite{NV}. On the other
hand, it is substantially slower than the fastest known
quantum algorithm for DT 
\cite{MS}, see Table 2 (Appendix). Note that DT can be regarded as
a special case of MWT.                                                                                       

We shall
use a specialized variant
of Grover's search due to D\"{u}rr and H{\o}yer \cite{A04,DH96}
for finding an entry of the minimum value in a table.
\begin{fact}\label{fact:dur}(D\"{u}rr and H{\o}yer \cite{DH96})
Let $T[k],$ $1\le k\le n,$ be an unsorted table where
all values are distinct. Given an oracle for $T,$
the index $k$ for which $T[k]$ is minimum can be 
found by a quantum algorithm with probability
at least $\frac 12$ in $O(\sqrt n)$ time.
\end{fact}

\noindent
{\bf Algorithm Q}
\par
\noindent
{\em Input:} 
an oracle $W$ for the weighted adjacency matrix
representing an edge weighted graph on $N$ vertices
and 
a positive integer $M$ such that the edge weights
are in the range $\{-M,...,M\}.$
\par
\noindent
{\em Output:} 
a minimum weight triangle in the graph.
\par
\noindent
Set an oracle for the virtual table $T[i,j,k],$ $i,j,k\in [N],$
such that $T[i,j,k]=(N+1)^4(W[i,j]+W[i,k]+W[k,j])$ $+(N+1)^2i+(N+1)j+k$ 
if $W[i,j],$ $W[i,k],$ $W[k,j]$ are defined
and $T[i,j,k]=(N+1)^4(3M+1)$ $+(N+1)^2i+(N+1)j+k$ otherwise
\par
\noindent
Find the indices $i',j',k'$ minimizing $T[i,j,k]$ by using the method from Fact 
\ref{fact:dur}
\par
\noindent
{\bf if} $T[i',j',k']< (n+1)^4(3M+1)$ {\bf then} 
{\bf return} $(i',j',k')$ {\bf else} {\bf return} ``No''
\par
\vskip 4pt
\noindent
Since the values of the entries of the table $T$ in Algorithm Q are distinct,
the method from Fact \ref{fact:dur}can be applied to $T.$ Hence, by Fact \ref{fact:dur}, we obtain
the following theorem.

\begin{theorem}
Let $G$ be a graph on $N$ vertices with integer edge weights. 
Given an oracle for the weighted adjacency matrix of $G,$
a minimum weight triangle in $G$ (if any) can be detected
by a quantum algorithm with high probability
in $\tilde{O}(N^{1.5})$ time.
\end{theorem}

In similar fashion, we obtain a quantum algorithmic solution
to the problem of verifying if a matrix defines a metric.

\begin{theorem} 
  Given an oracle for an $N\times N$ integer matrix,
  the problem of verifying if the matrix
  defines a metric admits an $\tilde{O}(N^{1.5})$-time quantum algorithm.
  \end{theorem}
\begin{proof}
  The algorithm is similar to that
  quantum for MWT, i.e., Algorithm Q.
  For each of the four properties that the matrix, say $K,$ should
  have, we form a virtual table with distinct values.
  On the base of the minimum value of an entry in the table,
  we can decide if the property holds. For searching for
  the minimum, we use again Fact \ref{fact:dur}.
  The largest of the virtual tables is of cubic in $N$ size
  and it corresponds to the triangle inequality property.
  For $i,j,k\in [N],$ the virtual table is defined by
  $T[i,j,k]=(N+1)^4(K[i,k]+K[k,j]-K[i,j])$ $+(N+1)^2i+(N+1)j+k$.
  Note that all values of the entries of $T$ are distinct
  and that the triangle inequality is violated by $K$ if
  and only if the minimum value of an entry of $T$ is
  negative. The virtual quadratic table $U$ corresponding to the
  symmetry property is given by
  $U[i,j]=(N+1)^4(U[i,j]-U[j,i])^2$ $+(N+1)^2i+(N+1)j+k$ for
  $i,j \in [N].$ We leave the rest of details to the reader.
  The application of the quantum search from Fact \ref{fact:dur}
  to the verification of the triangle inequality property dominates
  the time complexity.
  \qed
\end{proof}

\begin{table*}[t]
\begin{center}
\begin{tabular}{||c|c|c||} \hline \hline
problem & ~upper bound~ & author
\\ \hline \hline
APSP  &  $\tilde{O}(N^{2.5})$
& Navebi and Vassilevska Williams  \cite{NV}
\\ \hline
MWT & $\tilde{O}(N^{1.5})$ & This paper
\\ \hline
MDM  & $\tilde{O}(N^{1.5})$ & This paper
	\\ \hline
DT  & $\tilde{O}(N^{9/7})$ & Lee, Magniez, and Santha \cite{MS}
\\ \hline
3SUM & $\tilde{O}(N^{1+o(1)})$ & Ambainis \cite{A04}
\\ \hline
3SAT & $1.153^Npoly(N)$ & Ambainis \cite{A04}
\\ \hline \hline
\end{tabular}
\label{table: 2}
\vskip 0.5cm
\caption{Known upper bounds on the time complexity
  of quantum algorithms for problems discussed in this paper.
  In case of 3SAT $N$ stands for the number of variables
  in the input formula. }
\end{center}
\end{table*}
\section*{Final remark}
Among the implications derived from the conjectures considered
in this paper, the strongest seem to be those from the conjectures on 3SUM,
MWC-$\ell$, and SETH.
The weakest seems the implication from ETH.
See Table 1.
  \section{Acknowledgments}
  The author is very grateful
  to reviewers of an earlier version of this
  paper for valuable comments and suggestions.
\junk{
We can immediately generalize our results by assuming the
non-existence of an $O(N^{\lambda-\epsilon + k+o(1)})$-time algorithm
instead of an $O(N^{3-\epsilon + k+o(1)})$-time algorithm in Theorems
\ref{theo: APSP}, \ref{theo: APSPf} or instead of an $O(N^{2-\epsilon +
  k+o(1)})$-time algorithm in Theorem \ref{theo: 3SUM}, respectively.  Then,
in the implied negated inclusions the term $\frac 1{2+k}$ in the
exponent is replaced by $\frac {\lambda -2}{2+k}$ in Theorem
\ref{theo: APSPf}, the term $\max ( \frac 14,\frac 1{2+k})$
is replaced by $(\lambda
-2)\max (\frac 14,\frac 1{2+k})$ in Theorem \ref{theo: APSPf}, and
the term $\frac 1{1+k}$ is replaced by $\frac {\lambda -1}{1+k}$ in Theorem \ref{theo:
  3SUM}, respectively. In order to obtain a similar generalization of
Theorem \ref{theo: DT}, it is sufficient to replace $\omega $ by
$\lambda \le \omega$, appropriately.

The $P=NP$ question is equivalent to the question if there exist $k\ge
1$ such that $NDTIME(O(n),n)\subseteq DTIME(n^k).$ In turn, the latter
question can be extended by a standard padding argument to include the
question if there exist $k\ge 1$ such that $NDTIME(n^{1/k},n)\subseteq
DTIME(n).$ Of course, $P\neq NP$ implies the negative answer to the
extended question again by a padding argument but the reverse
implication does not seem to hold necessarily.  Hence, getting the
negative answer to the extended question might be easier than
disproving $P=NP.$ For instance, it is sufficient to show that there
is no deterministic linear-time algorithm for detecting a clique of
$N^{o(1)}$ size for this purpose.}
\junk{
In the realm of problems solvable in moderate polynomial time $T(n)$
(e.g., cubic or quadratic), one tends to use reductions in truly
sub-$T(n)$ time \cite{LPV,VWW} from other well known problems (e.g.,
APSP or 3SUM) or the exponential time hypothesis (ETH) in order to
express the relative hardness of substantially
improving the $T(n)$ bound. In this
paper, we provide an additional tool for expressing the hardness of
proving a close lower bound on $T(n)$ by showing that the latter in
case of some problems would imply an enormous breakthrough separation
result between deterministic and non-deterministic time.}

\vfill
\end{document}